\begin{document}

\title{Selfish Bin Covering\thanks{The research is supported by the National Natural Science Foundation of China under Grants 70425004, 90818026.}
}

\author{Zhigang Cao, Xiaoguang Yang\\cullencao@gmail.com,  xgyang@iss.ac.cn}
%

\tocauthor{cullen}
\institute{Key Laboratory of Management, Decision \& Information Systems,\\  Academy of Mathematics
and Systems Science,\\ Chinese Academy of Sciences,
Beijing, 100190, P.R. China.\\
\email{}}
\maketitle              

\begin{abstract}In this paper, we address the selfish bin covering problem, which is greatly related both to
 the bin covering problem, and to the weighted majority game. What we
mainly concern is how much the lack of coordination harms the social welfare. Besides the standard
PoA and PoS, which are based on Nash equilibrium, we also take into account the strong Nash
equilibrium, and several other new equilibria. For each equilibrium, the corresponding PoA and PoS
are given, and the problems of computing an arbitrary equilibrium, as well as approximating the
best one, are also considered.

 {\bf Keywords:} Selfish bin covering, weighted majority games, price of anarchy, price of stability, Nash
 equilibrium.
\end{abstract}
\section{Introduction}
The bin  covering problem, which is also called  the {\it dual bin packing problem}, since it has
some kind of dual relation with the famous bin packing problem, has $n+1$ positive integers, $a_1,
a_2, \cdots, a_n, b$, as its input, where $a_j$ ($1\leq j\leq n$) is the {\it size} of item $j$ and
$b$ the {\it volume} of bins. We assume that $a_j<b$ for all $1\leq j\leq n$ and there are
sufficiently many bins. The problem is to allocate the items into bins such that the number of {\it
covered} bins  is maximized, where a bin is called covered if the total size of items allocated to
it is equal to or greater than $b$. Alternatively, the problem can be seen as to partition the $n$
items such that the number of subsets in the partition whose total size is not less than $b$ is
maximized.

An implicit assumption  in the bin covering problem is the existence of a central decision maker,
who has absolute control of all the items, and aims to maximize the social welfare, i.e. the number
of covered bins (we can imagine it like this: for each covered bin, a profit of 1 is gained by the
society). This kind of centralized system is often encountered in the real world, and has its own
advantages, say, efficiency. At the same time, centralized systems are usually hard to realize and
really fragile. There is also an opposite kind of system, the decentralized system, where the
central decision maker disappears, and is replaced by a set of  agents, each of which controls a
single item and makes its own decision to maximize its own profit. Compared with the centralized
system, a decentralized system is usually easier to realize and more robust. Due to these
advantages, decentralized systems are drawing more and more attention of researchers in artificial
intelligence, control theory, mechanism design and theoretical computer science.

The biggest disadvantage of decentralized systems, however, is that in most cases, the lack of a
central decision maker will harm the social welfare, that is, it is usually inefficient. Our
question is,  how inefficient?  If the inefficiency is relatively small, then this is no big deal.
So an immediate question is, how to quantify the inefficiency? Fortunately, armed with game theory,
researchers find a satisfactory measure: {\it price of anarchy} (PoA). Game theory studies the
settings where all the players are self-interested, and takes equilibria as the outcomes that may
happen. Since equilibria are usually not unique, PoA is intuitively defined as the ratio between
the worst social welfare we may get among all the equilibria in the decentralized system, and that
in the centralized system (i.e. the optimal social welfare). The analysis of PoA, as well as PoS
(price of stability, which is the ratio between the best social welfare we may get among all the
equilibria and the optimal social welfare), for various decentralized systems, forms an important
part of the booming {\it algorithmic game theory} (\cite{r08}).

We refer to the decentralized bin covering problem as  the  {\it selfish bin covering}  problem
(SBC for short). Before giving the formal definitions of PoA and PoS for SBC, there are still three
questions we have to answer: 1. What are the payoff functions of agents? 2. How to define the
social welfare? 3. How to define an equilibrium?

 For the first question, we take the natural and frequently used proportional rule, that is, in any covered bin,
 the payoff allocated to any member is its size divided by the the total size of items allocated to that bin. For the second,
 we take the popular {\it utilitarian} welfare function, i.e. the sum of all the agents' payoff, which is exactly
  the number of covered bins. For the last,
 we take the famous (pure) Nash equilibrium (NE for short), (pure) strong Nash equilibrium (SNE), and four  newly defined
 ones: fireable Nash equilibria type (I) (FNE(I)), fireable Nash equilibria type (II) (FNE(II)),
fireable Nash equilibria type (III) (FNE(III)), and modified strong Nash equilibrium (M-SNE), whose
exact definitions will be given in the later corresponding sections, respectively.

 Formally, let $SBC$ be the set of all the selfish bin covering problems, $OPT(G)$ the
 optimal social welfare of $G\in SBC$, $NE(G)$ the set of NEs of $G$, and $p(\pi)$
 the social profit of $\pi\in NE(G)$, then PoA and PoS for SBC w.r.t. NE are defined as follows:
 \begin{eqnarray*}PoA^{NE}(SBC)=\inf _{G}\min _{\pi\in NE(G)}\frac{p(\pi)}{OPT(G)},\end{eqnarray*}
 \begin{eqnarray*}PoS^{NE}(SBC)=\inf _{G}\max_{\pi\in NE(G)}\frac{p(\pi)}{OPT(G)}.\end{eqnarray*}

Analogous to $PoA^{NE}$ and $PoS^{NE}$, we can define the other PoAs and PoSs. What we need to do
is simply replace $NE$ in the above two formulas with $SNE$ etc., respectively. In the rest of this
paper, the parameter $SBC$ in various PoAs and PoSs will be omitted.

For each equilibrium, we also concern the problems of computing an arbitrary equilibrium, as well
as approximating the best one (which is NP-hard to find). Our results are listed in the following
table, where complexity stands for the complexity of finding an arbitrary equilibrium, and AFPTAS
(Asymptotic Fully Polynomial Time Approximation Scheme) means that AFPTAS exists, that is, for
arbitrary positive small $\epsilon$, there is an algorithm which outputs an equilibrium whose
social welfare is at least $1-\epsilon$ times the optimal one when the optimal social welfare is
large enough, and the running time can be bounded by a polynomial both in $n$ and $1/ \epsilon$.

\begin{table}[!h]
\tabcolsep 0pt \caption{Main Results} \vspace*{-12pt}
\begin{center}
\def\temptablewidth{1\textwidth}
{\rule{\temptablewidth}{1pt}}
\begin{tabular*}{\temptablewidth}{@{\extracolsep{\fill}}c|cccc}
Equilibria&PoA&PoS&Complexity&Approximability\\
\hline
NE&0&1&$O(1)$&AFPTAS \\
FNE(I)&0.5&1&$O(n^2)$&AFPTAS\\
FNE(II)&0.5&1&NP-hard&NP-hard\\
FNE(III)&0.5&1&NP-hard&NP-hard\\
M-SNE&0.5&1&NP-hard&NP-hard\\
SNE&0.5&0.5&NP-hard&NP-hard\\
\end{tabular*}
       {\rule{\temptablewidth}{1pt}}
       \end{center}
       \end{table}

 \section{Related work}
{\bf The bin covering problem} is first studied by Assman in his thesis (\cite{a83}) and then in
the journal paper of Assmann et al. (\cite{ajkl84}). It is very easy to show that the bin covering
problem is NP-hard, and we are not able to approximate it with a worst case performance ratio
better than 0.5, unless P=NP (\cite{a83}). It is still not hard to show that the next-fit algorithm
has a worst case performance ratio of exactly 0.5 (\cite{ajkl84}). Analogous to the bin packing
problem, what we are interested is how the bin covering problem can be asymptoticly approximated.
To put it precisely, given any instance $I$, let $A(I)$ be the number of bins covered by algorithm
$A$, and $OPT(I)$ be the optimal value. The {\it asymptotic worst case ratio} of algorithm $A$ is
defined as
$$R^A_{\infty}=\liminf\limits_{OPT(I)\rightarrow\infty}\frac{A(I)}{OPT(I)}.$$
One central topic for the bin covering problem is to find a polynomial time algorithm with an
asymptotic worst case ratio as large as possible.

Low-order polynomial time algorithms with $R^A_{\infty}=2/3$ and $3/4$ are explored by Assmann
(\cite{a83}), Assmann et al. (\cite{ajkl84}) and Csirik et al. (\cite{cflz99}). Csirik et al.
(\cite{cjk01}) give the first APTAS (Asymptotic Polynomial Time Approximation Schemes), Jansen and
Solis-Oba (\cite{jo03}) derive an AFPTAS. For the analysis of average case performance, there are
also several results (\cite{cfgk91}, \cite{cjk01}). Woeginger and Zhang (\cite{wz99}) also consider
the variant with variable-sized bins.

{\bf The weighted majority game} (WMG for short), which is a classic model in coalitional
(cooperative) game theory and has rich applications in politics, is greatly related with the
selfish bin covering problem, so we give a sketch of it. An instance of WMG is usually represented
by $G=(q; a_1, a_2, \cdots, a_n)$, where $q$ is called the quota and $a_j$ the weight of player
$j$. $q$ is usually required to be bigger than half of the total weight, and this is the only
difference between WMG and SBC.

The characteristic function is defined as: $v(C)=1$ iff $\sum_{j\in C}a_j\geq q$; $v(C)=0$
otherwise. An interesting phenomenon in WMG is that the bargaining powers of players usually are
not proportional to their weights. For a simple example, there are 3 players in total, their
weights are 3, 4, 5, respectively, and the quota is 7, then all the three players have the same
bargaining power! So a central topic in WMG is how to measure the bargaining powers of the players.
There are mainly four recognized ways: the Shapley-Shubic index (\cite{ss54}), the Banzhaf index
(\cite{b65}), the Holler-Packel index (\cite{h82,hp83}), and the Deegen-Packel index (\cite{dp79}).

Matsui and Matsui (\cite{mm00,mm01}) prove that all the problems of computing the Shapley-Shubik
indices, the Banzhaf indices and the Deegan-Packel indices in WMGs are NP-hard, and there are
pseudo-polynomial time dynamic programming algorithms for them.  Cao and Yang (\cite{cz09}) show
that computing the Holler-Packel index is also NP-hard. Deng and Papadimitriou (\cite{dp94}) prove
that it is \#P-complete to compute the Shapley-Shubik index. Matsui and Matsui (\cite{mm01})
observe that Deng and Papadimitriou's proof can be easily carried over to the problem of computing
the Banzhaf index.

Back to SBC, though the proportional way of payoff allocating in our paper may not be absolutely
fair, the proportion does roughly measure the contribution that an agent makes to the coalition.
What's more, this way  prevails in the real world, and its being trivial to calculate, compared
with the NP-hardness of the more delicate ways , is really an advantage to our analysis.

{\bf Coalition formation games},  which studies which coalition structures (i.e. partitions of
players) are the most likely to form, is a relatively new branch in coalitional game theory. We can
safely classify SBC into coalition formation games. In coalition formation games, {\it stable}
coalition structures are treated as the ones to form. Since the concept of stability is just
another word for equilibrium, which we discussed in the last section, briefly introducing several
closely related stabilities that have already been used in coalition formation games will be
helpful.

Nash stability, i.e. NE that we mentioned in the last section, requires that no player will benefit
if he leaves the original coalition to which he belongs and joins a new one. In this concept, there
is no restriction for the players' migrations. To require that any player's migrations into a
coalition should not harm the original members gives the concept of {\it individual stability} (IE
for short). To further require that any player's migration should not harm members of the original
coalition to which he belongs gives {\it contractually individual stability} (CIE  for short). It's
obvious that NE implies IE, and IE further implies CIE (for more knowledge, please refer
\cite{bj02}). Since we will show that $PoA^{NE}=0$ and $PoS^{NE}=1$, the latter two concepts will
be omitted in our discussion.

{\bf Selfish bin packing} (SBP for short), which combines the idea of decentralization into the
classic bin packing problem, has a great analogy to SBC. In SBP, there are also $n$ items, each of
which has a size and is controlled by a selfish agent, and sufficiently many bins with identical
capacities $b$. The difference is that the total size of items that are packed into a bin should
not exceed $b$, and every nonempty bin incurs a cost of 1, which is shared among its members
proportional to their sizes. This model is introduced by Bil\`{o} (\cite{b06}). The exact
$PoA^{NE}$ is still unknown up to now, and the current best lower bound and upper bound are 1.6416
(by  Epstein and Kleiman (\cite{ek08}), Yu and Zhang (\cite{yz08}), independently) and 1.6428 (by
Epstein and Kleiman (\cite{ek08})), respectively, with a narrow gap to cover. Epstein and Kleiman
also show that $PoS^{NE}=1$, and $1.6067\leq PoA^{SNE}=PoS^{SNE}\leq 1.6210$. Yu and Zhang also
show that computing an NE can be done in $O(n^4)$ time.
\section{Notations and preliminaries}
We denote by $N$ the set of agents, i.e. $N=\{1, 2, \cdots, n\}$. Let $\pi=\{B_1, B_2, \cdots,
B_m\}$ be a partition of $N$, that is $\cup_{i=1}^mB_i=N$, and $B_i\cap B_j=\emptyset$ for all
$1\leq i\neq j\leq m$. We don't distinguish between a bin and the subset of agents in this bin, if
no confusion is incurred. For any subset $B_i$ of $N$, $s(B_i)$ is the total size of the members in
$B_i$, i.e. $s(B_i)=\sum _{j\in B_i}a_j$. A partition $\pi$ is called {\it reasonable} if there is
at most one $B_i$ ($1\leq i\leq m$) such that $s(B_i)<b$.

We use $B_{j,\pi}$ to denote the member in $\pi$ to which agent $j$ belongs. Let $p(\pi)$ be the
social welfare of $\pi$, i.e. the number of covered bins in $\pi$, $p(j, \pi)$ be the payoff
allocated to agent $j$ in $\pi$, then $p(j, \pi)=a_j/s(B_{j,\pi})$ if $s(B_{j,\pi})\geq b$, and
$p(j, \pi)=0$ if $s(B_{j,\pi})<b$. For any $B_i\in \pi$, let $a_{\min}(B_i)=\min\{a_j:j\in B_i\}$.
A bin $B_i$ is called {\it minimal covered} if it is covered and the removal of any of its member
causes it uncovered, i.e. $b\leq s(B_i)<b+a_{\min}(B_i)$, and {\it exactly covered} if $s(B_i)=b$.
$\pi$ is called {\it rational} if it is reasonable and all the covered bins are minimal covered.

The partition problem (\cite{gj79}), which may be the easiest NP-hard problem, is frequently used
in the proving of weakly NP-hardness. It is described as follows: given a set of positive integers
$e_1, e_2, \cdots, e_n$, can we partition these integers into two subsets, such that each of them
has exactly a half of the total size? i.e. is there a set $C\subseteq \{1, 2, \cdots, n\}$ such
that $\sum_{j\in C}e_j=\sum_{j\notin C}e_j$?

The subset sum problem (\cite{gj79}), which will be used later, is also a classic problem. Given a
set of positive integers $e_1, e_2, \cdots, e_n$ and an integer $s$, the problem asks if there
exists a subset which sums to $s$. Subset sum is NP-hard, and admits a pseudo-polynomial time
algorithm with time complexity $O(n\sum_{j=1}^ne_j)$. Subset sum can be seen as a special case of
the knapsack problem, which is known to be linearly solvable if the capacity of the knapsack is a
constant (\cite{p99}). Therefore, if $s$ is constant, subset sum is  also linearly solvable.

$FFD$ (First Fit Decreasing, \cite{j73}), which is a well known heuristic for the bin packing
problem, has an analogue in the bin covering problem. We still denote the analogue  as $FFD$. It
first sorts the items in non-increasing order of their sizes, then puts the items one by one in
this order to a bin until the bin is covered, and opens a new bin to repeat the above action if
there are still unpacked items. Due to the non-increasing order, we can easily see that bins
derived by $FFD$ are all minimal covered, except possibly the last uncovered one.

$LPT$ (Largest Processing Time, \cite{g69}), which is  a well known heuristic for the parallel
machine scheduling problem $P||C_{\max}$, will also be used later. $LPT$ first sorts all the jobs
in non-increasing order of their processing times, and then allocate the jobs one by one in this
order to a least loaded machine. In SBC, the size of an item can be naturally seen as the
processing time, and bins can be seen as machines.  In the next section, we will use $LPT$
partially. That is, a subset of items have already been allocated to bins, and $LPT$ is used to
allocate the remaining items to the already opened bins.

\section{Nash equilibrium}

 According to the  definition, a reasonable partition $\pi=\{B_1,
B_2, \cdots, B_m\}$ is an NE iff $p(j,\pi(j,B_i))\leq p(j,\pi)$ for all $j\in N$ and for all $B_i$
such that $j\notin B_i$, where $\pi(j,B_i)$ is a partition of $N$ derived by moving $j$ to $B_i$.
NE always exist, since it is trivial to see that $\{N\}\in NE$. Remember that we assume $a_j<b$ for
all $j\in N$. And therefore $PoA^{NE}=0$.

From any reasonable partition $\pi=\{B_1, B_2, \cdots, B_m\}$, we show that it is easy to derive an
NE, without decreasing the social welfare. Suppose that the unique uncovered bin in $\pi$ is $B_m$.

 \begin{center}Algorithm $FFD$-$LPT$\end{center}

{\it STEP 1.} Input the reasonable partition $\pi=\{B_1, B_2, \cdots, B_m\}$;

 ~~~~~~~~~~~~{\bf for} $1\leq i\leq m-1$

 ~~~~~~~~~~~~~~~~{\bf while} ($B_i$  is not minimal covered)

 ~~~~~~~~~~~~~~~~~~~~~~~~\{take out the smallest remaining
 item;\}

 ~~~~~~~~~~~~{\bf endfor}

{\it STEP 2.} Run $FFD$ on the taking out items and the items in $B_m$;

~~~~~~~~~~~~/*Without loss of generality, we assume there are still $m-1$ covered bins in total,
and the unique uncovered bin is $B_m$.*/

{\it STEP 3.} {\bf while} ($\max\{a_j: j\in B_m\}>\min\{a_j: j\in \cup_{i=1}^{m-1}B_i\}$)

~~~~~~~~~~~~~~~~~~~~~\{exchange the biggest item in $B_m$ with the smallest one in
$\cup_{i=1}^{m-1}B_i$;

~~~~~~~~~~/*Suppose $B_k$ is the bin which consists of the smallest item in $\cup_{i=1}^{m-1}B_i$
before the exchange.*/

 ~~~~~~~~~~~~~~~~~~{\bf while} ($B_k$  is not minimal covered)

 ~~~~~~~~~~~~~~~~~~~~~~~~~~ \{take out the smallest remaining
 item;\}

 ~~~~~~~~~~~~~~~~~~Run $FFD$ on the taking out items and items in $B_m$.\}

{\it STEP 4.} Call $LPT$ to allocate the items of $B_m$ into $B_1, B_2, \cdots, B_{m-1}$.

\begin{lemma}For any reasonable partition $\pi$,  $FFD$-$LPT$ gives an NE in
 $O(n^2)$ time, without decreasing the social welfare.\end{lemma}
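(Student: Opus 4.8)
The plan is to argue that the algorithm (a) always terminates in $O(n^2)$ time, (b) never decreases the social welfare, i.e., the number of covered bins stays at $m-1$, and (c) outputs a partition that is an NE. For termination and complexity, I would track a suitable potential. STEP 1 and STEP 2 run in $O(n\log n)$ plus the linear cost of $FFD$. The main concern is the outer \textbf{while} loop of STEP 3: I would show that each exchange strictly decreases $\max\{a_j : j\in B_m\}$ (the biggest item leaving $B_m$ is replaced by a strictly smaller one from $\cup_{i=1}^{m-1}B_i$, since the loop condition guarantees strict inequality), so each item can enter $B_m$ at most once via an exchange, bounding the number of iterations by $O(n)$; each iteration does an $FFD$ pass costing $O(n)$ after the initial sort, giving $O(n^2)$ overall. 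STEP 4 ($LPT$) costs $O(n\log n)$.

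For the social-welfare claim, the key invariant is that after STEPS 1--2 we still have $m-1$ covered bins, all of them minimal covered (this is exactly the property of $FFD$ noted in the preliminaries: all bins produced by $FFD$ are minimal covered except possibly the last, uncovered one), and the residual bin is again called $B_m$; the comment in STEP 2 records the harmless WLOG relabeling. Each exchange in STEP 3 removes the biggest item of $B_m$ and puts it into some $B_k$, then re-runs $FFD$ on $B_k$'s overflow together with $B_m$; since $B_k$ was minimal covered and we only add to it, it stays covered, and $FFD$ keeps the number of covered bins among the reshuffled items unchanged, so the count $m-1$ is preserved. Finally, STEP 4 only moves items \emph{out} of the uncovered bin $B_m$ into already-covered bins, which can never uncover a covered bin, so $p(\pi)$ does not drop.

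For the NE property, after STEP 3 terminates we have $\max\{a_j : j\in B_m\}\le\min\{a_j : j\in\cup_{i=1}^{m-1}B_i\}$, and all of $B_1,\dots,B_{m-1}$ are minimal covered. I would then check the deviation conditions for each agent $j$. If $j$ lies in a covered (hence minimal covered) bin, any bin $B_i$ it could move to is either covered — in which case $j$ only lowers its share because $s(B_i)+a_j>s(B_{j,\pi})$ forces a strictly larger denominator, using minimality $s(B_{j,\pi})<b+a_{\min}(B_{j,\pi})\le b+a_j$ — or the uncovered bin $B_m$, where moving $j$ in still leaves $s(B_m)+a_j<b+\min\{a_i:i\in\cup B_i\}+a_j$, which I must argue stays below $b$... this is precisely the delicate point. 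If $j\in B_m$, then $a_j\le\min\{a_i:i\in\cup_{k=1}^{m-1}B_k\}$, so joining any covered $B_i$ gives $s(B_i)+a_j < b+a_{\min}(B_i)+a_j$ and the $LPT$ balancing in STEP 4 ensures $j$'s addition does not cover $B_i$ (an $LPT$-type exchange argument: if it did, $B_i$ was the least loaded and already nearly covered, contradicting how $LPT$ placed the remaining items), so $j$ gains nothing.

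The main obstacle I anticipate is STEP 4 and the final NE verification: one must show that after distributing $B_m$'s items by $LPT$ onto the covered bins, (i) no new bin gets covered (else social welfare would actually \emph{increase}, which is fine, but the bookkeeping and the claim ``$m-1$ covered bins'' needs care), and more importantly (ii) no agent — whether currently in $B_m$ or in a covered bin — can profitably migrate; the latter requires combining the size bound from STEP 3's exit condition with the load-balancing property of $LPT$ to rule out that any single move tips an uncovered-after-STEP-4 bin over the threshold $b$. Making the $LPT$ exchange argument precise (that the final load of every ``target'' bin plus any single small item is still below $b$, or else that item would have covered it during $LPT$) is where the real work lies; everything else is routine.
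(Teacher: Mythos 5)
Your decomposition (termination, welfare preservation, NE verification) is the right one, and two pieces are sound: the welfare argument, and the bound for an agent in a still-minimal-covered bin moving to another covered bin ($s(B_k)+a_j\geq b+a_j>s(B_{j,\pi})$). But the part you yourself flag as ``where the real work lies'' is left unproved, and your framing of it is off target. Step 4 allocates \emph{all} of $B_m$'s items into the $m-1$ covered bins, so the final partition $\pi'=\{B_1',\dots,B_{m-1}'\}$ contains no uncovered bin; the scenarios you worry about (a single agent tipping a surviving uncovered bin over $b$, or ``$j$'s addition covering $B_i$'', which is already covered) simply do not arise. The only deviations to exclude are moves between covered bins, and the paper closes this with a reduction you are missing: it suffices that the \emph{smallest} agent $j_0$ of each final bin is satisfied, since $s(B_k')+a_j\geq s(B_k')+a_{j_0}\geq s(B_i')$ then holds for every larger member $j$ as well. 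For that smallest agent there are two cases. If $LPT$ placed it, it entered the then-least-loaded bin, hence was satisfied at that instant; being the smallest member of its bin it was the \emph{last} item $LPT$ put there, so its bin's size never changes afterwards while all other bins only grow, and satisfaction persists. If it was already in its bin at the end of Step 3, then by the exit condition of Step 3 every item of $B_m$ is no larger than every item of the covered bins, so any $LPT$ arrival would itself be the bin's smallest member; hence nothing joined that bin, it remains minimal covered, and your denominator bound applies. Without this two-case analysis the NE claim is not established.

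There is also a gap in your complexity bound. The invariant ``each exchange strictly decreases $\max\{a_j:j\in B_m\}$'' is not justified: after the exchange, $FFD$ repacks the items stripped from $B_k$ together with the remainder of $B_m$, and those stripped items --- or a tied second copy of the old maximum --- can land in the new uncovered bin, so the maximum need not strictly decrease (take $B_m$ containing two items of equal largest size). The paper's $O(n)$ bound on the outer loop rests on a different invariant: the item $x$ exchanged \emph{into} $B_k$ never returns to the uncovered bin, because $s(B_k)<b+a_x$ holds after the swap and is preserved as smaller items are removed, so $B_k$ becomes minimal covered before $x$ could ever be the smallest remaining item. Each of the $n$ items is exchanged out of the uncovered bin at most once, giving at most $n$ iterations of cost $O(n)$ each.
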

\begin{proof}Let $\pi'$ be the final partition. It is trivial that $p(\pi')\geq p(\pi)$.
Without loss of generality, we suppose $\pi'=\{B_1', B_2', \cdots, B_{m-1}'\}$.
  To verify that $\pi'\in NE$, it suffices to show that for all $1\leq i\leq m-1$, the smallest agent
in $B_i'$ is satisfied with its position.

We denote the partition at the end of Step 3 still as $\pi=\{B_1, B_2, \cdots, B_m\}$, and have the
next three observations: (a) each item in $\cup_{i=1}^{m-1}B_i$ is not bigger than any one in
$B_m$; (b) each $B_i$ ($1\leq i\leq m-1$) is minimal covered; (c) every agent in
$\cup_{i=1}^{m-1}B_i$ is satisfied with its position in the partial partition $\{B_1, B_2, \cdots,
B_{m-1}\}$. (a) is just an interpretation of the stopping condition of Step 4, (b) is due to $FFD$,
and (c) comes directly form (b).

For any $1\leq i\leq m-1$ and $j\in B_i'\in \pi'$ with $a_j=a_{\min}(B_i')$, we discuss in two
cases. {\bf Case 1.} $j$ is in $B_m$ at the end of Step 3. Due to $LPT$, $B_i$ has the smallest
total size before the joining of $j$, and therefore $j$ is temporarily  satisfied with its
position. $a_j=a_{\min}(B_i)$ tells us that $j$ is  the last member of $B_i$. After the joining of
$j$, the total size of $B_i$ keeps unchanged, while the total sizes of the other bins do not
decrease, so $j$ is satisfied with its position in the final partition $\pi'$. {\bf Case 2.} $j$ is
in $B_i$, $1\leq i\leq m-1$,  at the end of Step 3. Due to observation (a), there is no member
joining $B_i$ in Step 4. Since agent $j$ is satisfied at the end of Step 3 (observation (c)), it is
satisfied in the final partition $\pi'$.

Step 1 and Step 2 can be done in $O(n\log n)$ time. Since no exchanged item in $B_m$ comes back,
the outer while loop in Step 3 is executed for at most $n$ times. And it is easy to see that each
loop can be done in $O(n)$ time. So the total time for Step 3 is $O(n^2)$. Step 4 can be  easily
done in $O(n\log n)$ time. In sum, the total running time is $O(n^2)$. \qed\end{proof}

\begin{theorem}(a) $PoA^{NE}=0, PoS^{NE}=1$;

(b) There exists an AFPTAS for approximating the best NE.\end{theorem}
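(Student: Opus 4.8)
The plan is to dispatch the two parts of (a) quickly and then focus the real effort on (b). For part (a), the equality $PoA^{NE}=0$ is already essentially argued in the text preceding the statement: since $a_j<b$ for every $j$, the grand coalition $\{N\}$ is a Nash equilibrium (no single agent can cover a bin by itself, so every agent gets payoff $0$ and no deviation helps), yet it covers only one bin while $OPT(G)$ can be made arbitrarily large by taking many copies of a good instance; hence the infimum of the ratio is $0$. For $PoS^{NE}=1$, I would take an optimal partition, break it into a reasonable partition with the same social welfare (merging all the ``leftover'' uncovered bins into one), and feed it to Algorithm $FFD$-$LPT$; by the Lemma this returns an NE without decreasing the social welfare, so there is always an NE achieving $OPT(G)$, giving $PoS^{NE}\ge 1$, and trivially $PoS^{NE}\le 1$.

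For part (b), the strategy is to compose the existing AFPTAS for the (centralized) bin covering problem of Jansen and Solis-Oba with the $FFD$-$LPT$ rounding procedure of the Lemma. Concretely, given $\epsilon>0$, first run the centralized AFPTAS to obtain a partition $\sigma$ with $p(\sigma)\ge(1-\epsilon)OPT(G)$ for $OPT(G)$ large, in time polynomial in $n$ and $1/\epsilon$. Then convert $\sigma$ into a reasonable partition $\pi$ of the same social welfare by collecting all items lying in uncovered bins of $\sigma$ into a single bin $B_m$ (this can only keep every covered bin covered, so $p(\pi)=p(\sigma)$). Finally apply $FFD$-$LPT$ to $\pi$; the Lemma guarantees the output $\pi'$ is an NE with $p(\pi')\ge p(\pi)\ge(1-\epsilon)OPT(G)$, computed in additional $O(n^2)$ time. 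Thus the whole pipeline is an AFPTAS whose output is always a Nash equilibrium.

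The main obstacle, and the only place where care is genuinely needed, is verifying that the ``merge uncovered bins'' step really produces a \emph{reasonable} partition as required by the Lemma — i.e. that after merging we are left with at most one uncovered bin. This is immediate by construction (all uncovered items go into one bin), but one should also note that the merge never destroys a covered bin, so the social welfare is exactly preserved; combined with $p(\pi')\ge p(\pi)$ from the Lemma, the approximation guarantee carries through verbatim. A minor secondary point is the asymptotic qualifier: the guarantee $p(\pi')\ge(1-\epsilon)OPT(G)$ holds only once $OPT(G)$ is large enough, which is exactly what ``asymptotic'' in AFPTAS permits, so no extra work is needed there. I would therefore present (b) as essentially a one-line corollary of the Lemma together with the cited centralized AFPTAS.
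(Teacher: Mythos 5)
Your proposal is correct and follows essentially the same route as the paper: $PoA^{NE}=0$ via the grand-coalition equilibrium $\{N\}$, and both $PoS^{NE}=1$ and the AFPTAS by feeding an optimal (resp.\ AFPTAS-produced) partition, made reasonable by merging uncovered bins, into $FFD$-$LPT$ and invoking Lemma~1. The only cosmetic quibble is that merging uncovered bins gives $p(\pi)\geq p(\sigma)$ rather than equality (the merged bin might itself become covered), which only strengthens the bound.
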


\begin{proof}Since there is
always an optimal partition that is reasonable, we immediate have (a) from Lemma 1. For any
partition derived by the AFPTAS for the bin covering problem (\cite{jo03}), we can also give a
corresponding NE by {\it FFD-LPT} without decreasing the social welfare, so (b) is
valid.\qed\end{proof}

\section{Fireable Nash equilibria}

Let  $\pi=(B_1, B_2,\cdots, B_{m})$ be a reasonable partition, and $B_m$ the unique uncovered bin,
we define three fireable Nash equilibria. The same idea in these equilibria is that, every covered
bin should be minimal covered. This idea is quite natural, since any non-minimal covered bin is
redundant, firing certain agent or agents will make it minimal covered, and the other agents have
incentive to do this unanimously because they will all become strictly better off.

The differences lie in how to constrain the migrations of agents. Recall that in NE, there is no
constraint for the agents' migrations: as long as an agent can benefit from doing so, it is allowed
to do it; in IE, the migration of an agent should not harm its new partners; and in CIE, neither
should the migration harm its new partners nor should it harm the old colleagues.
\subsection{Definitions}
\begin{definition} $\pi\in FNE(I)$  iff:

(1) $\pi$ is rational.

(2) There exists no item $j\in\cup_{i=1}^{m-1}B_i$ such that (a) $B_m\cup \{j\}$ is minimal
covered, and (b) $s(B_m)+a_j<s(B_{j,\pi})$.
\end{definition}

In FNE(I), we require that migration of any agent from a minimal covered bin into the uncovered bin
$B_m$ be not allowed, if it causes $B_m$ non-minimal covered, even if this migration is applauded
by the migrating agent (and of course by the members in $B_m$).

\begin{definition}$\pi\in FNE(II)$ iff:

(1) $\pi$ is rational.

(2) There exists no item $j\in\cup_{i=1}^{m-1}B_i$ and a subset $E$ of $B_m$ such that (a)
$(B_m\setminus E)\cup \{j\}$ is minimal covered, and (b) $s(B_m\setminus E)+a_j<s(B_{j,\pi})$.
\end{definition}

In FNE(II), the migration that we remarked for FNE(I) is allowed. And the migrating agent is
actually farsighted: if it will benefit from the migration only after the  firing of some subset of
the original members, which it anticipates will happen,  then it will do it. Note that the
migration of any agent, either from a minimal covered bin or from the uncovered bin, into a minimal
covered bin, is not allowed, even if the migrating agent anticipates that this will benefit it
eventually. Further allowing this kind of migration gives FNE(III).

\begin{definition}$\pi\in FNE(III)$  iff:

(1) $\pi$ is rational.

(2) There exists no item $j\in N$, a bin $B_i$ ($1\leq i\leq m, B_i\neq B_{j,\pi}$) and a subset
$E$ of $B_i$ such that (a) $(B_i\setminus E)\cup \{j\}$ is minimal covered, and (b) $s(B_i\setminus
E)+a_j<\min\{\delta(B_i), \delta(B_{j,\pi})\}$, where $\delta(B)=s(B)$ if $s(B)\geq b$, and
$\delta(B)=+\infty$ otherwise, for any $B\subseteq N$.
\end{definition}

In the above definition, it is easy to check that (b) means if the new bin $(B_i\setminus E)\cup
\{j\}$ forms, then all its members will be strictly better off than in $\pi$. It is also easy to
see that FNE(III) implies FNE(II), which further implies FNE(I), since the constraints for agents'
migrations are the most in FNE(I), and are the least in FNE(III).

\subsection{The potential function}
We define a potential function which will be used in the next subsection and Section 7. Unlike the
ordinary single valued potential function, our potential function is vector valued. Let $P(\cdot)$
be the potential function, and $\pi=(B_1, B_2,\cdots, B_{m})$ is a reasonable partition, and $B_m$
the unique uncovered bin, then $$P(\pi)=(s(B_1'),s(B_2'), \cdots, s(B_{m-1}')),$$ where $\{B_1',
B_2', \cdots, B_{m-1}'\}=\{B_1, B_2, \cdots, B_{m-1}\}$ and $s(B_1')\leq s(B_2')\leq \cdots \leq
s(B_{m-1}')$.

For any two vectors $v=(v_{1},v_{2}, \cdots, v_{s}), w=(w_1,w_2,\cdots, w_t)$, we say that $v$ is
lexicographically smaller than $w$, which is denoted by $v\prec w$, iff there exists an integer
$j_0$, $1\leq j_0\leq \min\{s,t\}$, such that $v_{j_0}<w_{j_0}$, and $\forall j<j_0$, $v_j=w_j$.

Accordingly, we can define an order $\prec$ between reasonable partitions. Let $\pi^1,\pi^2$ be two
reasonable partitions, we say that $\pi^1\prec \pi^2$ iff $P(\pi^1)\prec P(\pi^2)$. It's easy to
check that this is a strictly partial order, that is, it satisfies ir-reflexivity ($\neg (\pi \prec
\pi$)), asymmetry ($\pi^1\prec \pi^2\Rightarrow \neg (\pi^2\prec \pi^1$)) and transitivity
($\pi^1\prec \pi^2$ and $\pi^2\prec \pi^3$ $\Rightarrow \pi^1\prec \pi^3$).

In the setting of FNE(I), where the migration can only occur from a covered bin to the uncovered
one, the natural best response dynamics, in which an arbitrary unsatisfied agent moves to the
uncovered bin unless the partition is already an equilibrium, always lead to an equilibrium, since
each migration makes the previous partition smaller in the sense of $\prec$, and  the number of
reasonable partitions is finite for any fixed instance of SBC. Due to this argument, we get
immediately that FNE(I) exists. Our next question is, is the best-response-dynamics algorithm
polynomial? The answer is YES.
\subsection{Main results}
 The basic ideas for the next two lemmas are from G. Yu and G. Zhang(\cite{yz08}). Starting from a rational
  partition $\pi=(B_1, B_2,\cdots, B_{m})$, where $s(B_1)\geq s(B_2)\geq \cdots\geq s(B_{m})$, and $B_m$ is the
unique uncovered bin, suppose the best-response-dynamics algorithm has $K$ migrations before it
ends. Let the partition after the $t$-th ($1\leq t\leq K$) migration be $\pi^t=(B_1^t,B_2^t,\cdots,
B_m^t)$, and suppose that these bins have been re-indexed such that $s(B^t_1)\geq s(B^t_2)\geq
\cdots \geq s(B^t_m)$. Then:

\begin{lemma}$s(B_m)<s(B_m^1)<s(B_m^2)<\cdots<s(B_m^K)$, and $s(B_j)\geq s(B^1_j)\geq s(B_j^2)\geq \cdots \geq s(B_j^K)$ for all $1\leq j\leq m-1$.\end{lemma}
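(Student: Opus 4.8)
The plan is to track how each migration in the best-response-dynamics algorithm affects the sorted vector of bin sizes, and to exploit the strict decrease in the lexicographic order $\prec$ established in the previous subsection. Recall that in the FNE(I) setting a migration always moves an agent $j$ out of some covered bin $B_i$ and into the uncovered bin $B_m$, and that the agent migrates only because it is (strictly) better off, which—given the proportional payoff rule and $a_j<b$—forces $s(B_m)+a_j>s(B_{j,\pi})$ after the move, so the bin that loses $j$ remains covered only if it was non-minimal, but since $\pi$ is rational every covered bin is minimal covered; hence the losing bin in fact becomes the new uncovered bin, while the old $B_m$, now enlarged by $a_j$, becomes covered. (I would state this ``role-swap'' carefully as the first step, since everything else rests on it.) Consequently, after the $t$-th migration, the new uncovered bin $B_m^{t}$ is the old covered bin $B_i^{t-1}$ with one item removed, and we have $s(B_m^{t}) = s(B_i^{t-1}) - a_j$; because $B_i^{t-1}$ was covered, $s(B_i^{t-1})\ge b > s(B_m^{t-1})+a_j-a_j$... more precisely, since $j$ strictly benefits, $s(B_m^{t-1})+a_j > s(B_i^{t-1}) - a_j + a_j$ is not quite it—the clean inequality is that $j$'s new payoff exceeds its old one, i.e. $a_j/(s(B_m^{t-1})+a_j) > a_j/s(B_i^{t-1})$, giving $s(B_m^{t-1})+a_j < s(B_i^{t-1})$, hence $s(B_m^{t}) = s(B_i^{t-1}) - a_j > s(B_m^{t-1})$. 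This proves the strict chain $s(B_m)<s(B_m^1)<\cdots<s(B_m^K)$.

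For the second chain, the key observation is that the multiset of covered-bin sizes changes in a very controlled way at each step: the value $s(B_i^{t-1})$ is removed from the multiset of covered-bin sizes and replaced by the value $s(B_m^{t-1})+a_j$, and we have just shown $s(B_m^{t-1})+a_j < s(B_i^{t-1})$. So at each migration exactly one entry of the sorted size-vector is decreased (and the set of other entries is unchanged). A single coordinate decrease in a sorted vector can only decrease each order statistic: if $u$ is obtained from $v$ by replacing one entry with a smaller value and re-sorting, then the $j$-th largest entry of $u$ is $\le$ the $j$-th largest entry of $v$ for every $j$. I would prove this elementary fact in one line (or cite it as folklore) and then apply it inductively over $t=1,\dots,K$ to the vector $(s(B_1^{t-1}),\dots,s(B_{m-1}^{t-1}))$ of covered-bin sizes, which yields $s(B_j)\ge s(B_j^1)\ge\cdots\ge s(B_j^K)$ for all $1\le j\le m-1$. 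One bookkeeping subtlety to address: at step $t$ the number of covered bins is preserved (one covered bin $B_i^{t-1}$ gives up an item and becomes uncovered, while the previously uncovered $B_m^{t-1}$ becomes covered), so the vector length $m-1$ is genuinely constant throughout, and the index $j$ ranges over the same set at every step.

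The main obstacle I anticipate is not any deep inequality but making the ``role-swap'' step airtight—in particular ruling out the possibility that, after agent $j$ leaves $B_i^{t-1}$, that bin is still covered (which would mean the number of uncovered bins temporarily drops to zero and the partition is no longer of the assumed form). This is exactly where rationality of the partitions along the dynamics is used: by definition a rational partition has every covered bin minimal covered, so removing any item from $B_i^{t-1}$ uncovers it; and one must check that the partition stays rational after the migration (the enlarged bin $B_m^{t-1}\cup\{j\}$ is minimal covered precisely by condition (2) in the definition of FNE(I)-type moves, or by the fact that the dynamics only performs such moves), so the inductive hypothesis ``$\pi^{t-1}$ is rational'' is maintained. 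Once that invariant is in hand, both chains follow from the two elementary monotonicity facts above, and the strictness in the first chain comes directly from the strict-benefit condition driving each migration. I would also remark that this lemma immediately gives a polynomial bound on $K$ via the potential $P$, which is presumably the point of the following lemma.
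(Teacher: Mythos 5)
Your proof is correct and follows essentially the same route as the paper's (much terser) argument: the strict-benefit condition $s(B_m^{t-1})+a_j<s(B_i^{t-1})$ together with minimal coveredness forces the role-swap, which gives the strictly increasing chain for the uncovered bin and shows that the multiset of covered-bin sizes changes by replacing one entry with a smaller one, whence every order statistic weakly decreases. The only cosmetic issue is the false start in your first paragraph before you settle on the clean inequality $a_j/(s(B_m^{t-1})+a_j)>a_j/s(B_i^{t-1})$; the final argument is sound.
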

\begin{proof}Since in any step the minimal covered bin to which the migrating agent belongs becomes the next unique uncovered bin, the first
part of the lemma is trivial. Because the total size of the new minimal covered bin is not smaller
than the one to which the migrating agent originally belongs, the second part is also easy.
\qed\end{proof}
\begin{lemma}Suppose agent $i\in N$ migrates $t_i$ times in the best-response-dynamics algorithm, then $t_i\leq m-1$. And hence $K\leq n(m-1)$.\end{lemma}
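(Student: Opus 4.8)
The plan is to attach to the $k$-th migration of agent $i$ a single integer in $\{1,\dots,m-1\}$, namely the rank — in the size-sorted order of the partition — of the minimal covered bin that this migration creates, and to show that this integer strictly decreases with $k$. This immediately gives $t_i\le m-1$, and summing over agents yields $K=\sum_{i=1}^n t_i\le n(m-1)$.

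First I would fix $i$ and let $\tau_1<\tau_2<\dots<\tau_{t_i}$ be the indices of the steps at which $i$ migrates, so that $i$'s $k$-th migration carries $\pi^{\tau_k-1}$ to $\pi^{\tau_k}$. In the best-response dynamics for $FNE(I)$ a migrating agent always moves into the unique uncovered bin, so right after $i$'s $k$-th migration the bin containing $i$ is $C_k:=B_m^{\tau_k-1}\cup\{i\}$, with $s(C_k)=s(B_m^{\tau_k-1})+a_i$; moreover the move is legal in $FNE(I)$ only if $C_k$ is minimal covered, hence $C_k$ is covered and thus occupies one of the ranks $1,\dots,m-1$ in the re-indexed $\pi^{\tau_k}$. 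Write $\sigma_k$ for this rank, so $C_k=B_{\sigma_k}^{\tau_k}$.

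The core step is $\sigma_{k+1}<\sigma_k$. Since $\tau_k-1<\tau_{k+1}-1$, the first part of the previous lemma (the uncovered-bin size is strictly increasing throughout the run) gives $s(B_m^{\tau_k-1})<s(B_m^{\tau_{k+1}-1})$, hence $s(C_k)<s(C_{k+1})$. On the other hand $\sigma_k\le m-1$ and $\tau_k<\tau_{k+1}$, so the second part of the previous lemma (for $j\le m-1$ the $j$-th largest bin never grows) gives $s(B_{\sigma_k}^{\tau_{k+1}})\le s(B_{\sigma_k}^{\tau_k})=s(C_k)$. Chaining these, $s(B_{\sigma_{k+1}}^{\tau_{k+1}})=s(C_{k+1})>s(C_k)\ge s(B_{\sigma_k}^{\tau_{k+1}})$, and since the bins of $\pi^{\tau_{k+1}}$ are listed in non-increasing order of size, the strictly larger bin must carry the strictly smaller index, i.e. $\sigma_{k+1}<\sigma_k$. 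Therefore $\sigma_1>\sigma_2>\dots>\sigma_{t_i}$ are $t_i$ distinct elements of $\{1,\dots,m-1\}$, giving $t_i\le m-1$.

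The one real difficulty I anticipate is isolating the right monotone quantity. The sizes $s(C_k)$ are monotone along $i$'s migrations but in the increasing direction, so merely counting their possible values would only buy a pseudopolynomial bound, and the rank of the bin $i$ vacates is not obviously controlled. The trick, and the reason both halves of the previous lemma must be used together, is to convert ``$C_{k+1}$ is larger than $C_k$'' into ``$C_{k+1}$ sits in a lower-indexed slot than $C_k$'', exploiting that once $C_k$ occupies slot $\sigma_k$ that slot can only shrink afterwards. The remaining points — which bin becomes uncovered after a migration, and legality of the move in $FNE(I)$ — are routine and already in place before the lemma.
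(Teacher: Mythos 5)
Your proof is correct and follows essentially the same route as the paper's: you track the rank of the minimal covered bin created by each of agent $i$'s migrations and show it strictly decreases, using the first part of the preceding lemma to get $s(C_k)<s(C_{k+1})$ and the second part to get $s(B_{\sigma_k}^{\tau_{k+1}})\le s(C_k)$, exactly as in the paper (which writes $r_j$ for your $\sigma_k$ and $\pi^x,\pi^y,\pi^z$ for your snapshots). No substantive difference.
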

\begin{proof}Suppose the bin that agent $i$ moves in for the $j$-th time ranks $r_j$ in the derived partition, $1\leq j\leq t_i$.
For any fixed $j$, $j\geq 2$, let $\pi^x, \pi^y$ and $\pi^z$ be the partition after the $j-1$-th
move of agent $i$, the partition before the $j$-th move of agent $i$, and the partition after the
$j$-th move of agent $i$, respectively. It is trivial that $x<y<z$. We also have $i\in
B^x_{r_{j-1}}$, $i\in B^z_{r_j}$, $s(B^z_{r_j})=s(B_m^{z-1})+a_i$ and
$s(B_{r_{j-1}}^x)=s(B_m^{x-1})+a_i$. Due to the first part of Lemma 2, we have
$s(B^{z-1}_{m})>s(B_{m}^{x-1})$ and therefore $s(B^z_{r_j})>s(B_{r_{j-1}}^x)$. Due to the second
part of Lemma 2, we have $s(B^z_{r_{j-1}})\leq s(B^x_{r_{j-1}})$. Therefore,
$s(B_{r_j}^z)>s(B^z_{r_{j-1}})$, and so $r_j<r_{j-1}$. This completes the proof.\qed\end{proof}

 \begin{theorem}(a) FNE(I) always exists, and computing an FNE(I) can be done in $O(n^2)$ time;

 (b) $PoS^{FNE(I)}=1$;

 (c) There exists an AFPTAS for approximating the best FNE(I).\end{theorem}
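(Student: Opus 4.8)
The plan is to handle the three parts in order, leaning heavily on the machinery already developed. For part (a), I would first invoke the observation at the end of Subsection 5.2 that the best-response dynamics strictly decreases the partition in the lexicographic order $\prec$ and that there are only finitely many reasonable partitions for a fixed instance; this immediately gives existence of an FNE(I). For the running time, Lemma 3 says $K \le n(m-1)$, i.e. the number of migrations is $O(n^2)$ (since $m \le n$). Each individual migration step requires scanning the agents in the covered bins to find one whose move to $B_m$ is profitable and keeps $B_m$ minimal covered — this is an $O(n)$ check per step, possibly $O(n\log n)$ with a sort. That naively gives $O(n^3)$, so the delicate point is to argue the total cost is actually $O(n^2)$: I would start from a rational partition produced by $FFD$ (or by $FFD$-$LPT$ from any reasonable partition, which already costs $O(n^2)$ by Lemma~1) so that all covered bins are minimal covered, maintain the bins in sorted order of total size, and observe via Lemma~2 that the sizes of the top $m-1$ bins are non-increasing across the whole run while $s(B_m^t)$ is strictly increasing — so the bookkeeping to locate the next profitable migration can be amortized. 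This amortization argument is the step I expect to be the main obstacle and the place where the proof must be written carefully.

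For part (b), I would show $PoS^{FNE(I)} = 1$ by the same device used for Theorem~1(a): take an optimal partition, which can always be chosen reasonable, feed it to $FFD$-$LPT$ to obtain a partition in $NE$ with the same social welfare (Lemma~1), and then note that $FFD$-$LPT$ in fact outputs a \emph{rational} partition — every covered bin is minimal covered because of the $FFD$ calls in Steps 2--3 and because Step 4's $LPT$ fills only the uncovered bin $B_m$ into the already-minimal-covered bins without adding anything (by observation (a), no item moves into a covered bin). Running the FNE(I) best-response dynamics from this rational partition can only further increase the number of covered bins (each migration turns a minimal covered bin into the new uncovered bin while $B_m$ becomes covered — the covered count is non-decreasing), so we reach an FNE(I) with social welfare at least $OPT(G)$, hence exactly $OPT(G)$. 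Thus $PoS^{FNE(I)} \ge 1$, and trivially $\le 1$.

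For part (c), I would combine part (b)'s construction with the AFPTAS for the bin covering problem of Jansen and Solis-Oba (\cite{jo03}): run that AFPTAS to get a partition whose covered count is at least $(1-\epsilon)OPT(G)$ when $OPT(G)$ is large, make it reasonable (merging all uncovered bins into one costs nothing in social welfare), apply $FFD$-$LPT$ to reach a rational NE without losing covered bins, and then run the FNE(I) best-response dynamics, which by part (a) terminates in $O(n^2)$ time and only increases the social welfare. The output is an FNE(I) with social welfare at least $(1-\epsilon)OPT(G)$, and the total running time is polynomial in $n$ and $1/\epsilon$, giving the AFPTAS. The only thing to double-check here is that the extra post-processing ($FFD$-$LPT$ plus the dynamics) is genuinely polynomial and never decreases $p(\cdot)$, both of which are already established.
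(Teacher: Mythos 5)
Your overall strategy matches the paper's: existence and termination of the best-response dynamics via the potential order $\prec$ and Lemma 3, and parts (b) and (c) by converting an optimal (resp.\ AFPTAS) partition into an FNE(I) without losing covered bins. However, there is a concrete error in your argument for (b) and (c): the claim that $FFD$-$LPT$ outputs a \emph{rational} partition is false. Step 4 of $FFD$-$LPT$ explicitly calls $LPT$ to pour the items of the uncovered bin $B_m$ into the covered bins $B_1,\dots,B_{m-1}$; whenever $B_m\neq\emptyset$ at the end of Step 3, some covered bin $B_i$ receives an extra item and is then covered but not minimal covered, since removing that extra item leaves a set of total size at least $s(B_i)\geq b$. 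Observation (a), which you cite, only compares item sizes; it does not say that no item enters a covered bin. So the partition you hand to the FNE(I) dynamics need not be rational, is not itself an FNE(I), and will never become rational under migrations into the uncovered bin alone. The repair is easy and is what the paper implicitly intends by ``the same reasons as in Theorem 1'': skip $FFD$-$LPT$ altogether, trim each covered bin of the optimal (or AFPTAS) partition down to a minimal covered bin, run $FFD$ on the removed items (this can only increase the number of covered bins), merge the uncovered leftovers into one bin to obtain a rational partition with at least as many covered bins, and then run the FNE(I) dynamics, which preserve the covered count because each migration makes $B_m\cup\{j\}$ covered while $B_{j,\pi}\setminus\{j\}$ becomes the new unique uncovered bin.

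On part (a), your accounting is more candid than the paper's: you correctly note that Lemma 3 bounds the number of migrations by $n(m-1)=O(n^2)$, that locating a profitable migration naively costs $O(n)$ per step, and hence that an amortization argument is needed to reach $O(n^2)$ total time. You flag this as the main obstacle without resolving it; the paper simply declares (a) ``straightforward'' and supplies no such argument either, so this is a gap you share with the paper rather than one you introduced, but as written your proof of the $O(n^2)$ bound is incomplete.
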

 \begin{proof}Since $m<n$, and a rational partition is easy to calculate (say, by $FFD$), in $O(n)$ time,
 (a) is straightforward. (b) and (c) are valid for the same reasons as in Theorem 1. \qed\end{proof}

 \begin{theorem}$PoA^{FNE(I)}=PoA^{FNE(II)}=$ $PoA^{FNE(III)}=0.5$.\end{theorem}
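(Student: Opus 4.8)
Since FNE(III) $\subseteq$ FNE(II) $\subseteq$ FNE(I), any lower bound on $p(\pi)/OPT(G)$ that holds for all $\pi\in FNE(I)$ automatically holds for the other two classes, and any instance exhibiting a ratio close to $0.5$ in the most restrictive class FNE(III) does the same for the others. So the plan is to prove $p(\pi)\ge \tfrac12\,OPT(G)$ for every $\pi\in FNE(I)$ (this gives $PoA^{FNE(I)},PoA^{FNE(II)},PoA^{FNE(III)}\ge 0.5$), and then exhibit a single family of instances together with a genuine FNE(III) on them whose social welfare is asymptotically half the optimum (this gives all three $PoA$'s $\le 0.5$). The two halves are independent, and I expect the lower-bound half to be the real work.

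For the lower bound, let $\pi=(B_1,\dots,B_{m-1},B_m)$ be an FNE(I), so $\pi$ is rational: every $B_i$ ($i\le m-1$) is minimal covered, i.e. $b\le s(B_i)<b+a_{\min}(B_i)$, and $B_m$ is the unique uncovered bin, $s(B_m)<b$. I would compare $\pi$ with a fixed optimal partition $\pi^*$ having $OPT(G)$ covered bins. The key structural fact is condition (2) of FNE(I): no item $j$ sitting in a covered bin $B_{j,\pi}$ can be moved into $B_m$ so as to leave $B_m\cup\{j\}$ minimal covered while strictly decreasing $j$'s bin-size. In particular, for every item $j$ in a covered bin with $a_j\ge b-s(B_m)$, either $B_m\cup\{j\}$ fails to be minimal covered, or $s(B_m)+a_j\ge s(B_{j,\pi})$. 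Since $B_{j,\pi}$ is minimal covered, $s(B_{j,\pi})<b+a_j$, so the failure of minimal-coverage of $B_m\cup\{j\}$ is the binding case: it forces $s(B_m)+a_j\ge b+a_{\min}(B_m\cup\{j\})$, i.e. the items are "large enough" that $B_m$ plus any single outside item overshoots. The plan is to use this to show that every covered bin $B_i$ in $\pi$ has $s(B_i)<2b$ is \emph{not} quite what is needed; rather, I would argue a "charging" bound: each covered bin of $\pi$ has total size $<2b$ except possibly in a controlled way, hence $\sum_{i=1}^{m-1}s(B_i)+s(B_m)=\sum_{j}a_j\ge OPT(G)\cdot b$, while $\sum_{i=1}^{m-1}s(B_i)<2b\,(m-1)$ roughly, giving $m-1> OPT(G)/2$. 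The delicate point is handling items smaller than $b-s(B_m)$, for which the ``move into $B_m$'' test is vacuous; I would handle these by noting that if such small items exist in two different covered bins one could consolidate, or more carefully by showing $B_m$ together with the smallest item of the whole instance would already be minimal covered unless $s(B_m)$ is large, and then splitting into the cases $s(B_m)$ large / $s(B_m)$ small. This case analysis on the size of the uncovered bin, and the precise counting argument relating $\sum s(B_i)$ to $(m-1)b$ and to $OPT(G)\cdot b$, is the main obstacle.

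For the matching upper bound I would construct, for each small $\varepsilon>0$, an instance with $b$ large and items of two types so that the optimal partition packs them into $2L$ exactly-covered bins while there is an FNE(III) $\pi$ using only $L$ bins, each of size just under $2b$ (so that no further splitting is "fireable" because any split would leave a piece uncovered, and no cross-bin migration helps since every bin is already minimal covered with respect to its smallest element); the standard example is $n=2L$ items each of size slightly more than $b/2$, say $a_j = \lceil (b+1)/2\rceil$ with $b$ odd, paired up — then two items cover a bin exactly, $OPT=L$, but that partition \emph{is} itself an FNE, so I instead need the classic trick of three item sizes, e.g. sizes roughly $b/2+\delta$ and a few tiny items, arranged so that the ``bad'' rational partition bundles items into minimal-covered bins of near-double size while the optimum splits them; I would verify rationality (every bundled bin is minimal covered because removing its smallest item drops it below $b$) and verify condition (2) of all three FNE definitions directly on this explicit configuration. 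Checking that the bad partition genuinely satisfies the \emph{strongest} definition, FNE(III) — i.e. that \emph{no} farsighted coalition move of the form "item $j$ joins $B_i\setminus E$" is profitable for all its members — is the subtle part of this half, but it reduces to a finite check on the two or three item sizes involved. Combining the two halves yields $PoA^{FNE(I)}=PoA^{FNE(II)}=PoA^{FNE(III)}=0.5$. \qed
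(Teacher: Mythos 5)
Your overall architecture (a counting lower bound of $1/2$ valid for every rational partition, hence for all of FNE(I), plus an explicit family of FNE(III) instances achieving ratio tending to $1/2$) is exactly the paper's, but both halves are left with self-declared ``main obstacles'' that should not exist, and in trying to route around them you discard the argument that actually works. For the lower bound you assert that showing $s(B_i)<2b$ for every covered bin is ``not quite what is needed'' and then propose a case analysis on $s(B_m)$ and on items smaller than $b-s(B_m)$ that you never carry out. That detour is the gap: condition (2) of the FNE definitions plays no role here at all. Rationality (condition (1)) alone gives $s(B_i)<b+a_{\min}(B_i)<2b$ for every covered bin, since every item has size $<b$; together with $s(B_m)<b$ this yields $OPT\cdot b\le\sum_j a_j<\bigl(2(m-1)+1\bigr)b$, hence $OPT\le 2(m-1)=2p(\pi)$ by integrality. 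This is the one-line argument the paper uses (``the minimal covered property guarantees\dots''), and your worry about items for which the migration test is vacuous is a red herring.

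For the upper bound you never actually produce an instance: your only concrete candidate ($2L$ items of size about $b/2+\delta$) you yourself note gives ratio $1$, and the replacement is left as ``the classic trick of three item sizes'' plus an unperformed ``finite check.'' This gap matters because the sizes must be tuned so that the wasteful partition is a genuine FNE(III) while the optimum doubles the bin count, and items near $b/2$ cannot do this (two of them barely cover a bin, so nothing is wasted). What works is items just below $b$: the paper takes $b=2n$, $2n$ large items of size $2n-2$ and $4n$ unit items. Pairing the large items gives $n$ minimal covered bins of size $4n-4$, and the $4n$ unit items fill $2$ exactly covered bins, for $n+2$ covered bins in total; the FNE(III) check goes through because any proposed new bin would have to be covered yet have size strictly below $b=2n$ or below $4n-4$ with two large items, which is impossible. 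Meanwhile the optimum puts one large item with two unit items in each of $2n$ bins, so the ratio is $(n+2)/(2n)\rightarrow 1/2$. Your plan is the right one, but as written neither half is a proof.
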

\begin{proof}The minimal covered property guarantees that none of the three equilibria has a $PoA$ less than 0.5.
To prove  that they are all exactly 0.5, we only have to show that $PoA^{FNE(III)}=0.5$, since
$FNE(III)\subseteq FNE(II)\subseteq FNE(I)$.

We construct an instance as follows. There are $6n$ items in total: $2n$ large items with
$a_1=a_2=\cdots=a_{2n}=2n-2,$ $4n$ small ones with $a_{2n+1}=a_{2n+2}=\cdots a_{6n}=1$, and the
volume of the bins is $b=2n$. It is easy to check that the partition that each bin contains either
two large items or $2n$ small ones is an FNE(III). And the partition each bin contains one large
item and two small ones is an optimal partition. Since $(n+2)/(2n)\rightarrow 0.5$ as $n\rightarrow
\infty$, we complete the whole proof.\qed
\end{proof}

 \begin{lemma}For any rational partition $\pi$,
 there exists a $\pi'\in FNE(III)$, such that $p(\pi')\geq p(\pi)$.\end{lemma}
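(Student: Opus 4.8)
The plan is to produce $\pi'$ as a canonical rational partition and then verify it is an equilibrium. The social welfare of a rational partition is bounded, so it attains a maximum $W$ over all rational partitions; since $\pi$ is rational, $W\geq p(\pi)$. Among the finitely many rational partitions of welfare exactly $W$, let $\pi'$ be one whose potential $P(\pi')$ is $\prec$-minimal, where $\prec$ and $P$ are as in Section~5.2 (these partitions all have exactly $W$ covered bins, so $\prec$ linearly orders their potential vectors and a minimal one exists). By construction $\pi'$ is rational and $p(\pi')=W\geq p(\pi)$, so it suffices to show $\pi'\in FNE(III)$, i.e. to rule out a witness $j\in N$, $B_i\neq B_{j,\pi'}$, $E\subseteq B_i$ violating condition~(2) of Definition~3.

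Suppose such a witness existed. I would apply to it the following \emph{repair step}: form the new minimal covered bin $(B_i\setminus E)\cup\{j\}$; gather the items that become loose — the members of $E$, the remainder $B_{j,\pi'}\setminus\{j\}$ of the vacated bin when $B_{j,\pi'}$ was covered, and the items of the (unique) uncovered bin of $\pi'$, if any — and re-pack these with $FFD$ into minimal covered bins plus at most one uncovered bin; leave every other bin untouched. Since the untouched bins were already minimal covered, $(B_i\setminus E)\cup\{j\}$ is minimal covered by condition~(2)(a), and $FFD$ outputs only minimal covered bins, the result $\rho$ is rational. The argument then rests on two claims: (i) $p(\rho)\geq W$, and (ii) if $p(\rho)=W$ then $P(\rho)\prec P(\pi')$. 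Given these, maximality of $W$ forces $p(\rho)=W$, whereupon (ii) contradicts the $\prec$-minimality of $\pi'$; hence no witness exists and $\pi'\in FNE(III)$, which proves the lemma with $\pi'$ as constructed.

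It remains to establish (i) and (ii), and the verification splits according to which of $B_i$, $B_{j,\pi'}$ were covered in $\pi'$ — they cannot both be the (unique) uncovered bin. If exactly one was covered, then exactly one covered bin is destroyed and is replaced by $(B_i\setminus E)\cup\{j\}$, whose size is strictly below $\min\{\delta(B_i),\delta(B_{j,\pi'})\}$ by condition~(2)(b), i.e. strictly below the size of the destroyed bin; the $FFD$ re-pack then adds $k\geq 0$ further covered bins, so $p(\rho)=W+k$, maximality forces $k=0$, and with $k=0$ exactly one coordinate of the sorted covered-bin-size vector strictly decreases, so $P(\rho)\prec P(\pi')$. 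The delicate case is when $B_i$ and $B_{j,\pi'}$ were \emph{both} covered: two covered bins are destroyed while only $(B_i\setminus E)\cup\{j\}$ is rebuilt explicitly, so I would first deduce $a_j<s(E)$ from condition~(2)(b) (which forces $s((B_i\setminus E)\cup\{j\})<s(B_i)$), combine $s(E)>a_j$ with $s(B_{j,\pi'})\geq b$ to see that the loose items have total size strictly greater than $b$, so that $FFD$ rebuilds at least one covered bin and $p(\rho)\geq W$; and, in the borderline subcase where it rebuilds exactly one (so $p(\rho)=W$), prove $P(\rho)\prec P(\pi')$ by an interleaving argument on the sorted size vector that uses that the rebuilt bin $(B_i\setminus E)\cup\{j\}$ is strictly smaller than \emph{both} destroyed bins. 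This last bookkeeping — how the sorted multiset of covered-bin sizes moves when two of its entries are traded for one small bin plus an $FFD$ bin of a priori uncontrolled size — is the step I expect to be the main obstacle, and it is exactly where the strictness of the inequality in condition~(2)(b) of Definition~3 is used.
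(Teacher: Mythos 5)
Your proposal is correct, and the step you flag as ``the main obstacle'' is not actually a gap: for increasingly sorted vectors of equal length, $P(\rho)\prec P(\pi')$ holds as soon as, at the smallest value $v$ where the two multisets of covered-bin sizes have different cumulative counts, $\rho$ has strictly more bins of size $\le v$. Your new bin $(B_i\setminus E)\cup\{j\}$ has size $s_0<\min\{s(B_i),s(B_{j,\pi'})\}$, and below $\min\{s_0,s_3\}$ (where $s_3$ is the single $FFD$ bin) the two multisets agree, while at $\min\{s_0,s_3\}\le s_0$ the multiset for $\rho$ gains an element and the one for $\pi'$ gains none; so the first divergence favours $\rho$ regardless of how large $s_3$ is, and the interleaving works out. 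That said, your route is a variational repackaging of what the paper does rather than an identical argument. The paper proves the contrapositive locally: from any rational $\pi\notin FNE(III)$ it builds, by an explicit five-step redistribution (move $j$ into $B_i$, return the fired set to $B_{j,\pi}$, trim to minimal covered, dump the excess into the uncovered bin, trim again), a rational $\pi'\prec\pi$ with $p(\pi')\ge p(\pi)$, and then relies implicitly on finiteness of $\prec$ to terminate the descent. You instead pick the $\prec$-minimal partition among the welfare-maximal rational ones and show it admits no witness, using an $FFD$ re-pack of the loose items as the repair. The two proofs share the same engine (the vector potential of Section 5.2 and a welfare-preserving, potential-decreasing repair); yours avoids arguing termination explicitly and in fact proves slightly more than the lemma asks (an FNE(III) attaining the maximum welfare over all rational partitions), while the paper's version is phrased as an improvement dynamic starting from the given $\pi$. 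One small caution: your claim that $\prec$ ``linearly orders'' the welfare-$W$ partitions is not literally true (distinct partitions can share a potential vector), but a $\prec$-minimal potential still exists by finiteness, which is all you use.
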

\begin{proof}
If $\pi$ is not an FNE(III), then there exists an agent $j\in N$ and a subset $C_j$ of $B_i$
($1\leq i\leq m, B_i\neq B_{j,\pi}$) such that $b\leq a_j+s(B_i\setminus C_j)<s(B_{j,\pi})$.  We
can construct a new rational partition $\pi'\prec\pi$,  without decreasing the social welfare, in
at most five steps as follows.

 1.  Move item $j$ into $B_i$;

 2. If $C_j\neq \emptyset$, move the items in $C_j$ into $B_{j,\pi}$;

 3. If $(B_{j,\pi}\setminus \{j\}\cup C_j)$ is non-minimal covered, take out some subset
 $D_j$ of items such that it is minimal covered;

 4 .If all the bins are covered, open a new bin with items in $D_j$, else put the items in $D_j$
into the unique uncovered bin, which  must be $B_m$;

5. If $B_m\cup D_j$ is non-minimal covered, take out a subset $E_j$ of items such that it is
minimal covered, and open a new bin with items in $E_j$.

It is not hard to check that $\pi'\prec\pi$ and $p(\pi')\geq p(\pi)$. This completes the proof.\qed
\end{proof}

\begin{theorem}$PoS^{FNE(II)}=$ $PoS^{FNE(III)}=1$.\end{theorem}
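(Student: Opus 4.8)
The plan is to show $PoS^{FNE(II)} = PoS^{FNE(III)} = 1$ by combining Lemma~5 with the inclusions $FNE(III)\subseteq FNE(II)\subseteq FNE(I)$ and Theorem~2(b). First I would observe that since $PoA^{FNE(II)} = 0.5$ (Theorem~4), and more importantly since $FNE(II)\supseteq FNE(III)$, it suffices to prove $PoS^{FNE(III)}\geq 1$; the reverse inequality $PoS\leq 1$ is automatic because the social welfare of any equilibrium can never exceed the optimum. So the whole content is: for every instance $G\in SBC$, there is an $FNE(III)$ whose social welfare equals $OPT(G)$.

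Next I would start from an optimal partition. As noted just before Theorem~1, there is always an optimal partition that is reasonable; applying $FFD$ (or the firing argument in the preamble to Section~4) to each covered bin makes every covered bin minimal covered without dropping the count, so we obtain a \emph{rational} partition $\pi$ with $p(\pi) = OPT(G)$. Now I would invoke Lemma~5: there exists $\pi'\in FNE(III)$ with $p(\pi')\geq p(\pi) = OPT(G)$. Since $p(\pi')\leq OPT(G)$ trivially, $p(\pi') = OPT(G)$, hence $\max_{\sigma\in FNE(III)(G)}\frac{p(\sigma)}{OPT(G)} = 1$ for this $G$, and taking the infimum over all $G$ gives $PoS^{FNE(III)} = 1$. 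Because $FNE(III)\subseteq FNE(II)$, the partition $\pi'$ is also an $FNE(II)$, so the same computation yields $PoS^{FNE(II)} = 1$ as well.

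The only subtlety — and the one place a referee would look — is making sure Lemma~5 is actually applicable: Lemma~5 takes as input a \emph{rational} partition, so I must be careful that the partition built from an optimal solution is genuinely rational (reasonable, and with all covered bins minimal covered), not merely reasonable. I expect this to be the main (though minor) obstacle, and I would handle it in one sentence by citing the $FFD$/firing reduction already described in the text. There is no heavy calculation here; the theorem is essentially a corollary of Lemma~5 plus the equilibrium inclusions, so the proof will be short.

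\begin{proof}
Since $FNE(III)\subseteq FNE(II)$, it suffices to prove $PoS^{FNE(III)}=1$; the inequality $PoS^{FNE(II)}\le 1$ (and likewise for $FNE(III)$) is immediate because no equilibrium can have social welfare exceeding $OPT(G)$. Fix any $G\in SBC$. There is always an optimal partition that is reasonable, and applying $FFD$ (equivalently, firing redundant agents from each covered bin as described at the beginning of Section~5) turns it into a rational partition $\pi$ with $p(\pi)=OPT(G)$. By Lemma~5 there is a $\pi'\in FNE(III)$ with $p(\pi')\ge p(\pi)=OPT(G)$, hence $p(\pi')=OPT(G)$. Therefore $\max_{\sigma\in FNE(III)(G)}p(\sigma)/OPT(G)=1$ for every $G$, so $PoS^{FNE(III)}=1$. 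As $\pi'\in FNE(III)\subseteq FNE(II)$, the same $\pi'$ witnesses $PoS^{FNE(II)}=1$.\qed
\end{proof}
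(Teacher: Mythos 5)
Your proof is correct and follows essentially the same route as the paper: apply the lemma stating that any rational partition can be converted into an $FNE(III)$ without decreasing the social welfare (this is Lemma~4 in the paper, not Lemma~5, which concerns SNE) to a rational optimal partition, then use $FNE(III)\subseteq FNE(II)$. Your extra care in checking that an optimal partition can be made rational is a point the paper leaves implicit, and your handling of it is fine.
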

\begin{proof} Lemma 4 tells us that $PoS^{FNE(III)}=1$. Since FNE(III) is stronger than FNE(II), the theorem is given.\qed\end{proof}
\begin{theorem}Computing an arbitrary FNE(II) is NP-hard, and so is computing an arbitrary FNE(III).\end{theorem}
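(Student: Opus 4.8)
The plan is to reduce from the partition problem, building on the structural characterizations of FNE(II) and FNE(III) together with the potential-function machinery from Section~5.2. Given an instance $e_1,\dots,e_n$ of partition with total size $2S$, I would construct an SBC instance whose ``honest'' items encode the $e_j$'s (scaled up so that $a_j<b$ is respected), set the bin volume $b=S$, and add a small number of gadget items whose role is to make the existence of a valid FNE(II)/FNE(III) \emph{force} a decision about whether the $e_j$'s can be split evenly. The guiding intuition is: in a rational partition every covered bin is minimal covered, so the freedom an adversary has in choosing a valid equilibrium is very restricted; by tuning the gadget sizes one can arrange that the \emph{only} rational partitions avoiding a profitable farsighted migration of the type forbidden in Definition~4 (resp.\ Definition~5) are exactly those in which one covered bin has size precisely $S$ obtained from an even split. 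Thus any algorithm that outputs an arbitrary FNE(II) (resp.\ FNE(III)) would decide partition.

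Concretely, the steps would be: (i) describe the instance and the gadget items, keeping the sizes polynomially bounded so the reduction is genuinely polynomial (this is where scaling $e_j\mapsto c\,e_j$ and adding unit ``filler'' items matters, as in the proof of Theorem~4); (ii) show the ``yes'' direction --- if the $e_j$'s split evenly, exhibit an explicit rational partition and verify it satisfies condition~(2) of Definition~3 (resp.\ Definition~4, Definition~5), i.e.\ no agent $j$ and subset $E$ of $B_m$ (resp.\ no agent $j$, bin $B_i$, and subset $E$ of $B_i$) give a strictly improving farsighted move; (iii) show the ``no'' direction --- if no even split exists, argue that \emph{every} rational partition admits such a forbidden migration, so no FNE(II) (resp.\ FNE(III)) exists, hence any putative algorithm must fail (or, if one prefers a search-to-decision framing, the output of the algorithm can be inspected to recover a witness). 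For (iii) the potential function $P(\cdot)$ and Lemma~4 are useful: Lemma~4 says from any rational partition we can descend (in the $\prec$ order) to an FNE(III), so the set of FNE(III)'s is nonempty \emph{unless} we restrict attention to rational partitions with a prescribed covered-bin structure that the gadget enforces; the delicate point is to make the gadget pin down that structure exactly.

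For the FNE(III) case I expect I can piggyback on the FNE(II) construction with at most cosmetic changes, because Definition~4 and Definition~5 differ only in also forbidding migrations \emph{into} a minimal covered bin, and the gadget can be designed so that the binding obstruction is already a migration into the uncovered bin $B_m$ (common to both definitions); then the same instance works for both, and one only needs to re-verify condition~(2) of Definition~5 in the ``yes'' direction, which is a routine strengthening. Alternatively, if a single instance does not serve both, I would add a second, structurally identical gadget to block the extra migration type.

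The main obstacle, and the part that will require the most care, is the ``no'' direction: proving that when the partition instance is infeasible, \emph{no} rational partition whatsoever escapes a forbidden farsighted migration. Unlike ordinary NE, the farsighted condition (2) in Definitions~3--5 quantifies over subsets $E$ of a bin, so an adversary building an equilibrium has many partitions to try; one must show the gadget sizes make all of them simultaneously unstable. I would handle this by a careful case analysis on how many gadget items sit with the ``honest'' items versus the filler items, using an exact arithmetic argument: the sum of any subset of honest items equals $S$ only via a genuine even split, so if that is impossible then in every rational partition some covered bin's size differs from the target by at least a controlled amount, and that slack is exactly what enables the small gadget item to profitably migrate (after anticipating the ejection of a suitable subset $E$). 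Getting this slack-versus-gadget-size bookkeeping right, with all sizes integral and polynomially bounded, is the crux.
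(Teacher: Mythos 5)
There is a genuine gap, and it sits at the heart of your plan. Your primary strategy for the ``no'' direction is to design gadgets so that when the $e_j$'s cannot be split evenly, \emph{no} rational partition escapes a forbidden farsighted migration, i.e.\ no FNE(II)/FNE(III) exists, ``hence any putative algorithm must fail.'' This cannot work: Lemma~4 (via the very potential-function machinery you invoke) shows that from \emph{any} rational partition one can descend in the $\prec$ order to an FNE(III), so FNE(III) --- and a fortiori FNE(II) --- exists for \emph{every} instance of SBC. No gadget can make the equilibrium set empty. The correct mechanism, which you mention only as a parenthetical fallback, is that equilibria always exist but every one of them betrays the answer. The paper's reduction does exactly this: take small items $a_j=2e_j$, two large items of size $\sum_j e_j+1$, and $b=\sum_j 2e_j+1$, so the total size is exactly $2b$ and any rational partition has exactly two bins with exactly one covered; a short case analysis on farsighted deviations of a large item (which can push out a subset $E$ of the other bin) shows the covered bin is exactly covered if and only if the partition instance is a yes-instance. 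One then simply inspects the output of the hypothetical algorithm. Your proposal never exhibits a concrete gadget and explicitly defers the ``slack-versus-gadget-size bookkeeping,'' which is precisely the content of the proof, so even setting aside the flawed framing, the argument is not complete.

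A secondary but real error: you insist on ``keeping the sizes polynomially bounded so the reduction is genuinely polynomial.'' For a reduction from partition (which is only weakly NP-hard) this is both unnecessary and self-defeating: a polynomial-time reduction only requires the constructed numbers to have polynomially many bits, and if the item sizes were polynomially bounded in magnitude, the resulting SBC instance would be solvable in polynomial time by the pseudo-polynomial algorithm of Corollary~1, so the reduction would prove nothing. Relatedly, flooding the instance with unit filler items (as in the PoA lower-bound construction of Theorem~4) is not available here, since exponentially many such items would be needed to interact with $e_j$'s of exponential magnitude.
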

\begin{proof}We prove this by reduction to the partition problem. Given any instance of the partition problem: $e_1, e_2, \cdots, e_n$,
we construct an instance of SBC as follows.

There are $n+2$ items in total: $n$ small ones $a_1=2e_1, a_2=2e_2,\cdots, a_n=2e_n,$ and 2 large
ones $a_{n+1}=a_{n+2}=\sum_{j=1}^{n}e_j+1,$ and the volume of the bins is $b=\sum_{j=1}^{n}2e_j+1$.
Let $\pi$ be an FNE(II). It suffices to show that $\pi$ has a bin that is exactly covered iff the
answer to the partition problem is yes.

The necessity part is trivial. We show the sufficiency part. Let $C$ be a set of items, and
$\sum_{j\in C}e_j=(1/2)\sum_{j=1}^{n}e_j$. Since the total size of all the items is exactly $2b$,
we know that $\pi$ has exactly two bins. Suppose that $\pi=\{B_1,B_2\}$, and $B_1$ is minimal
covered. There are two possibilities: {\bf case 1.} $B_1$ has two large items; {\bf case 2.} $B_1$
has one large item and a set of small ones. In case 1, $B_2$ will be exactly the set of small
items, and it is uncovered. Therefore, item $n+1$ will be strictly better off if he moves to $B_2$
and pushes out the items in $C$. This contradicts the fact that $\pi$ is an FNE(II). Therefore,
this case will not occur.

In case 2, suppose $B_1$ is made up of item $n+1$ and a set $D$ of small items. We claim that $B_1$
is exactly covered, which will complete the whole proof. If not, we will have $B_2$ is uncovered
and  $\sum_{j\notin D}e_j<(1/2)\sum_{j=1}^{n}e_j$. Then $\sum_{j\in D}e_j\geq
(1/2)\sum_{j=1}^{n}e_j+1$. Therefore $s(B_1)=a_{n+1}+\sum_{j\in D}a_j\geq b+3>b+2$, and item $n+1$
will be strictly better off if it moves to $B_2$ and pushes out all the items other than item
$n+2$, because $a_{n+1}+a_{n+2}=b+2$. This is a contradiction.\qed
\end{proof}

\section{Strong Nash equilibrium}

A partition $\pi=(B_1, B_2, \cdots, B_m)$ is an SNE iff no group of agents can become all strictly
better off by forming a new bin. That is, there is no subset $B\subseteq N$ such that $s(B)\geq b$
and $s(B)<s(B_{j,\pi})$ for all $j\in B$. It is trivial to see that SNE is stronger than NE and
FNE(III).

 For any set of items $E$ and its subset $F$, if
$s(F)=\min\{s(G):G\subseteq E, s(G)\geq b\}$, we say that $F$ is a minimum subset w.r.t. $(E,b)$.
The next lemma characterizes the SNE.
\begin{lemma}Suppose $\pi=(B_1, B_2, \cdots, B_m)$ is a reasonable partition, $B_m$ is the unique uncovered bin, and
 $s(B_1)\leq s(B_2)\leq \cdots\leq s(B_{m-1})$. Then $\pi$ is an SNE iff $B_k$ is a minimum subset w.r.t.
 $(\cup_{i=k}^mB_i,b)$ for all $1\leq k\leq m-1$. \end{lemma}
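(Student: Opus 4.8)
The plan is to prove the two implications of the \emph{iff} separately. The single structural fact I will lean on is that the covered bins are indexed in \emph{increasing} order of size, so $s(B_i)\ge s(B_k)$ whenever $i\ge k$; together with the observation that for $k\le m-1$ the set $B_k$ is itself a covered subset of the pool $\cup_{i=k}^{m}B_i$, this shows that a minimum subset with respect to $(\cup_{i=k}^{m}B_i,b)$ always exists and has size at most $s(B_k)$, with equality precisely when $B_k$ is itself such a minimum subset. I read the blocking condition in the natural way (the one consistent with the $\delta(\cdot)$ convention used for FNE(III)): a coalition $B$ with $s(B)\ge b$ blocks $\pi$ when every member is strictly better off in the new bin, i.e. $s(B)<s(B_i)$ for each member presently in a covered bin $B_i$, while a member presently in the uncovered bin $B_m$ is automatically strictly better off (its payoff rises from $0$ to $a_j/s(B)$); $\pi$ is an SNE exactly when no such $B$ exists.

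For the direction ``$\Leftarrow$'', assume every $B_k$ with $1\le k\le m-1$ is a minimum subset with respect to $(\cup_{i=k}^{m}B_i,b)$, and suppose for contradiction that some $B$ blocks $\pi$. Since $s(B)\ge b>s(B_m)$, $B$ is not contained in $B_m$, so $k:=\min\{i:B\cap B_i\neq\emptyset\}$ lies in $\{1,\dots,m-1\}$ and $B\subseteq\cup_{i=k}^{m}B_i$. Choose any $j\in B\cap B_k$; as $B_k$ is covered, the fact that $j$ strictly prefers $B$ forces $s(B)<s(B_k)$. Then $B$ is a covered subset of $\cup_{i=k}^{m}B_i$ of size strictly less than $s(B_k)$, contradicting minimality of $B_k$ there.

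For the direction ``$\Rightarrow$'', I argue by contraposition: suppose some $B_k$ with $1\le k\le m-1$ is \emph{not} a minimum subset with respect to $(\cup_{i=k}^{m}B_i,b)$, and let $F$ be an actual minimum subset there. Since $B_k$ is a covered subset of that pool, $s(F)\le s(B_k)$, and equality is impossible (else $B_k$ would attain the minimum), so $s(F)<s(B_k)$. I claim $F$ blocks $\pi$: each $j\in F$ lies in some $B_i$ with $i\ge k$; if $i\le m-1$ then $s(F)<s(B_k)\le s(B_i)$, so $j$ strictly prefers $F$, and if $i=m$ then $j$ currently gets payoff $0$ and again strictly prefers the covered bin $F$. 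Hence $F$ is a blocking coalition and $\pi$ is not an SNE, completing the contrapositive.

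I do not expect a genuine obstacle here; the whole proof is an exercise in bookkeeping around the uncovered bin $B_m$, and the only delicate point is exactly that bookkeeping, namely the semantics of ``strictly better off'' for an agent currently in $B_m$. Under an overly literal reading of the inequality in the SNE definition (requiring $s(B)<s(B_{j,\pi})=s(B_m)<b$) such an agent could never belong to a blocking coalition, the claim in the ``$\Rightarrow$'' step would break, and the characterization would in fact fail --- for instance, with $b=10$ the partition putting items of sizes $7$ and $6$ together and the item of size $3$ alone would spuriously count as an ``SNE'' even though $\{7,3\}$ is a strictly smaller covered set. So I would make the intended semantics explicit (matching $\delta(\cdot)$) before running the two directions above; with that in place nothing further is needed.
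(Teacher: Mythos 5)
Your proof is correct and follows essentially the same route as the paper's: sufficiency via the smallest index $k$ with $B\cap B_k\neq\emptyset$ and the containment $B\subseteq\cup_{i=k}^{m}B_i$, and necessity by exhibiting a minimum subset of $\cup_{i=k}^{m}B_i$ as a blocking coalition. Your explicit treatment of members of the uncovered bin $B_m$ (via the $\delta(\cdot)$ semantics, so that such agents are automatically strictly better off in any covered coalition) is in fact more careful than the paper's own necessity argument, which asserts $s(B_{j,\pi})\geq s(B_{k_0})$ for all $j\in B$ even though this fails literally for $j\in B\cap B_m$.
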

 \begin{proof}(sufficiency) Suppose that $\pi$ is not an SNE, then there is a subset $B\subseteq N$ such that
$s(B)\geq b$ and $s(B)<s(B_{j,\pi})$ for all $j\in B$. Let $k_0$ be the smallest index such that
$B\cap B_{k_0}\neq \emptyset$, and $j_0\in B\cap B_{k_0}$. Then $B\subseteq \cup_{i=k_0}^mB_i$. By
hypothesis,  $s(B_{k_0})=\min\{s(G):G\subseteq \cup_{i=k_0}^mB_i, s(G)\geq b\}$. So $s(B_{k_0})\leq
s(B)$. This contradicts $s(B)<s(B_{k_0})$.

(necessity) Suppose that $B_{k_0}$ is not a minimum subset w.r.t.
 $(\cup_{i={k_0}}^mB_i,b)$, $1\leq k_0\leq m-1$, that is $s(B_{k_0})>\min\{s(G):G\subseteq \cup_{i=k_0}^mB_i, s(G)\geq
b\}$. Let $B$ be a minimum subset w.r.t.
 $(\cup_{i={k_0}}^mB_i,b)$, then $b\leq s(B)<s(B_{k_0})$. Since $B\subseteq \cup_{i={k_0}}^mB_i$, we know $s(B_{j,\pi})\geq s(B_{k_0})>s(B)$
 for all $j\in B$. Therefore, $s(B)<s(B_{j,\pi})$ for all $j\in B$. A contradiction with $\pi\in SNE$. \qed\end{proof}
\begin{theorem}SNE exists, and to compute an arbitrary SNE is weakly NP-hard.\end{theorem}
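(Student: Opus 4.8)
The plan is to prove the two halves separately. For \emph{existence}, the natural approach is a constructive/algorithmic one driven by Lemma~5: a reasonable partition $\pi=(B_1,\dots,B_m)$ with $s(B_1)\le\cdots\le s(B_{m-1})$ and $B_m$ uncovered is an SNE exactly when $B_k$ is a minimum subset w.r.t.\ $(\cup_{i=k}^m B_i,b)$ for every $k$. So I would build an SNE greedily from the bottom up: repeatedly pick, among all items not yet assigned to a finished bin, a \emph{minimum} covering subset (i.e.\ a subset $F$ with $s(F)\ge b$ minimizing $s(F)$), declare it the next bin, remove it, and recurse on what remains; when the remaining items can no longer cover a bin, they form the unique uncovered bin $B_m$. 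One must check that the bins are produced in non-decreasing order of size (this is essentially immediate: after removing a minimum covering subset, the minimum covering value over the smaller ground set can only weakly increase, since any covering subset of the smaller set is still a covering subset of the larger one), and that the resulting partition is reasonable (at most one uncovered bin, by construction). Then Lemma~5 applies verbatim and the partition is an SNE. This shows existence unconditionally. As a sanity check one can note $\{N\}$ is always reasonable so the ground set is never pathological, and every $a_j<b$ guarantees no bin need be a singleton forced by a huge item.

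For the \emph{weak NP-hardness of computing an arbitrary SNE}, I would reduce from \textsc{Partition} (or equivalently \textsc{Subset Sum}), reusing the style of the FNE(II) hardness proof (Theorem~7). Given positive integers $e_1,\dots,e_n$, build an SBC instance whose sizes are scaled copies of the $e_j$ together with a small constant number of ``gadget'' large items, and a bin volume $b$ chosen so that the total size is a small multiple of $b$, forcing the number of bins in any SNE. The key is to design the instance so that \emph{every} SNE must contain a bin of size exactly $b$ (an exactly covered bin), and such an exactly covered bin can exist if and only if some subset of the $e_j$'s sums to the target value. The direction ``yes-instance $\Rightarrow$ SNE has an exact bin'' uses the minimum-subset characterization of Lemma~5: a bin that is exactly covered is trivially a minimum subset, and I would argue conversely that any SNE whose smallest bin is \emph{not} exactly covered would allow some coalition (e.g.\ one involving a large gadget item) to regroup into a strictly smaller covered bin, contradicting Lemma~5 — much as item $n+1$ in Theorem~7 could defect. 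Conversely, from any SNE containing an exactly covered bin we can read off the desired subset sum.

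The main obstacle, I expect, is getting the gadget and arithmetic exactly right so that the ``$B_k$ minimum subset'' condition of Lemma~5 genuinely \emph{forces} an exactly covered bin in a yes-instance, rather than merely permitting one. The FNE(II) reduction in Theorem~7 already shows the flavor of the argument — two large items of size $\sum e_j+1$, small items $2e_j$, volume $\sum 2e_j+1$, total size $2b$ forcing two bins — and I would try to see whether that same instance, or a mild variant of it, also works for SNE. Since $FNE(III)\subseteq SNE$ in the sense that SNE is stronger (SNE $\Rightarrow$ FNE(III)), the easy ``necessity'' direction carries over, but the delicate point is the sufficiency direction: I must rule out an SNE of case~1 type (the smallest bin holding both large items, leaving all small items uncovered), and here the coalitional deviation available under SNE is at least as powerful as the single-agent deviation used for FNE(II), so the same contradiction should go through. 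I would spend the bulk of the effort verifying that no \emph{other} coalition structure sneaks in as an SNE that avoids an exact bin, which is exactly where the Lemma~5 characterization must be invoked carefully.
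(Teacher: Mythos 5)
Your existence argument coincides with the paper's: the bottom-up greedy that repeatedly extracts a minimum covering subset is exactly the construction licensed by Lemma~5, and it doubles as the pseudo-polynomial algorithm (each minimum covering subset is found by calls to subset sum, cf.\ Corollary~1) that the paper exhibits to justify the word \emph{weakly} in the statement. You should make that last point explicit: ``weakly NP-hard'' requires you to show a pseudo-polynomial algorithm in addition to NP-hardness, and although your greedy construction already is one, you never say so.

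For the hardness half you are working much harder than the paper does, and you leave the work unfinished. The paper's reduction uses no gadgets at all: given $e_1,\dots,e_n$ it simply sets $a_j=e_j$ and $b=\frac12\sum_{j}e_j$. The total size is $2b$, and by Lemma~5 the smallest covered bin of any SNE is a minimum covering subset of $N$; hence the SNE consists of two exactly covered bins iff some subset sums to $b$, i.e.\ iff \textsc{Partition} is a yes-instance, so any algorithm producing an arbitrary SNE decides \textsc{Partition}. Your proposed route --- reusing the two-large-item gadget of Theorem~7 --- would in fact also go through with no new case analysis: since every SNE is rational and SNE $\Rightarrow$ FNE(III) $\Rightarrow$ FNE(II), any SNE of that instance is already an FNE(II) of it, and Theorem~7 has proved that such a partition contains an exactly covered bin iff \textsc{Partition} is a yes-instance. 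As written, however, your hardness argument is only a plan: you do not commit to a concrete instance, and you explicitly defer ``getting the gadget and arithmetic exactly right'' and ruling out stray equilibria to future effort, so this half does not yet constitute a proof.
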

\begin{proof}We only have to show the second part. We prove it by reduction to the partition problem. Let $e_1, e_2,\cdots, e_n$ be the input of the partition problem,
and an instance of SBC is constructed simply as $a_j=e_j, 1\leq j\leq n$, and
$b=(1/2)\sum_{j=1}^{n}e_j$. Let $\pi$ be an SNE of SBC. It is straightforward  that $\pi$ has two
exactly covered bins iff the answer to the partition problem is yes.

To prove that the  problem is weakly NP-hard, it suffices to show that a pseudo-polynomial time
algorithm is admitted. This is obvious since the calculating of $s(F)=\min\{s(G):G\subseteq E,
s(G)\geq b\}$ can be done in pseudo-polynomial time simply by recursively calling the algorithm to
the subset sum problem. \qed\end{proof}

The next two corollaries are both obvious.
\begin{corollary}An arbitrary SNE, which is also FNE(III) and FNE(II), can be computed in $O(n\sum_{j=1}^ne_j)$ time.\end{corollary}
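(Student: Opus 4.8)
The plan is to read off a direct construction from the characterisation in Lemma 5. First note that the clause ``which is also FNE(III) and FNE(II)'' costs nothing: it has already been observed that $SNE\subseteq FNE(III)\subseteq FNE(II)$, so any partition that we certify to be an SNE is automatically an FNE(III) and an FNE(II). Hence the only substantive claim is that \emph{some} SNE can be produced within the stated running time.

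The construction is the greedy ``peeling'' suggested by Lemma 5. Start with $E:=N$; while $s(E)\ge b$, compute a minimum subset $F$ w.r.t.\ $(E,b)$, open a new bin containing $F$, and set $E:=E\setminus F$; once $s(E)<b$, let the residual set $E$ be the unique uncovered bin (and if $s(N)<b$ to begin with, just output $\{N\}$). Let $\pi=(B_1,\dots,B_m)$ be the output, with $B_m$ the residual bin; it is reasonable by construction. Since the pool of unassigned items shrinks monotonically from round to round, and $\min\{s(G):G\subseteq E',\,s(G)\ge b\}$ can only grow when the ground set $E'$ shrinks, the covered bins $B_1,\dots,B_{m-1}$ emerge in nondecreasing order of size; with this labelling, $B_k$ is exactly the set extracted from $E=\bigcup_{i=k}^{m}B_i$ at round $k$, i.e.\ a minimum subset w.r.t.\ $(\bigcup_{i=k}^{m}B_i,\,b)$ for every $1\le k\le m-1$. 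By Lemma 5, $\pi$ is an SNE.

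For the running time, two observations suffice. A minimum covered subset $F$ of the current $E$ is automatically \emph{minimal} covered, for deleting any element of $F$ that still leaves the sum at least $b$ would contradict the minimality of $s(F)$; hence $b\le s(F)<b+\max_{j}a_j<2b$. Consequently a minimum subset w.r.t.\ $(E,b)$ can be found by running the standard pseudo-polynomial subset-sum dynamic program over the items of $E$ only up to value $2b$, which takes $O(nb)$ time, a witnessing $F$ being recovered within the same bound. Secondly, the covered bins are pairwise disjoint and each has size at least $b$, so their number, and thus the number of executions of the while loop, is at most $s(N)/b$. Multiplying, the total cost is $O\!\bigl((s(N)/b)\cdot nb\bigr)=O\!\bigl(n\,s(N)\bigr)=O\!\bigl(n\sum_{j=1}^{n}a_j\bigr)$ (and if the loop never runs, everything is $O(n)$, still within this bound).

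The one point needing care is this last multiplication: a naive implementation that rebuilds a subset-sum table of width $s(E)$ each round would only yield $O(n^2\sum_j a_j)$. The sharper bound relies on the ``balance'' between a per-round cost of $O(nb)$ (after capping the table at width $O(b)$, which is legitimate by the minimality argument above) and a number of rounds bounded by $s(N)/b$, so that the factor $b$ cancels the factor $1/b$. Everything else---correctness via Lemma 5 and the FNE containments---is immediate, consistent with the remark that this corollary is ``obvious''.
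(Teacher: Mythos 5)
Your proposal is correct and follows the same route the paper intends: the paper declares the corollary ``obvious,'' pointing back to the proof of Theorem 7, where an SNE is built by recursively extracting a minimum subset w.r.t.\ $(E,b)$ via the subset-sum dynamic program, with correctness certified by Lemma 5 --- exactly your greedy peeling. The one thing you add beyond the paper is the amortized running-time accounting (capping the DP table at width $2b$ because a minimum covered subset is minimal covered, and bounding the number of rounds by $s(N)/b$), which is in fact needed to reach the stated $O(n\sum_{j=1}^n a_j)$ bound rather than the weaker bound a naive round-by-round analysis would give.
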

\begin{corollary}If $b$ is a constant, then all the equilibria discussed in our paper can be computed in $O(n^2)$ time.\end{corollary}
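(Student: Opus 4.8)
The plan is to dispose of the six equilibria one at a time, the observation being that an $O(n^2)$ bound is already in hand for NE (just output $\{N\}$, or run $FFD$-$LPT$ of Lemma 1) and for FNE(I) (the best-response-dynamics algorithm of Theorem 2 runs in $O(n^2)$ with no hypothesis on $b$). So the content lies in FNE(II), FNE(III), M-SNE and SNE, for all of which the general-case intractability came entirely from a subset-sum-type obstruction that collapses when $b$ is a constant. The first thing I would record is the structural fact that makes this work: in any rational (indeed reasonable) partition every bin has constant total size, since each covered bin is minimal covered and hence $s(B_i)<b+a_{\min}(B_i)<2b$, while the unique uncovered bin has size $<b$; in particular $m=O(n)$, every bin holds $O(1)$ items, and — crucially — any \emph{minimum subset} w.r.t. a pair $(E,b)$ has total size in $[b,2b)$.

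The heart of the argument is that an SNE can be built in $O(n^2)$ by directly implementing the characterization of Lemma 5: peel off a minimum subset $B_1$ w.r.t. $(N,b)$, then a minimum subset $B_2$ w.r.t. $(N\setminus B_1,b)$, and so on, stopping when the remaining items have total size $<b$ (that remainder becomes the uncovered bin $B_m$). One checks at once that the bins so produced satisfy $s(B_1)\le s(B_2)\le\cdots$, because each $B_{k+1}$ is itself a feasible candidate in the minimization defining $s(B_k)$; hence by Lemma 5 the output is an SNE. The only nontrivial cost is one minimum-subset query on a set $E$: since the answer lies in $\{b,b+1,\dots,2b-1\}$ and $b$ is a constant, it suffices to solve subset sum on $E$ for each of these $O(1)$ targets, which by the observation recalled in the preliminaries is linear in $|E|$, and the same linear-time routine recovers a witnessing subset. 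Thus each peel costs $O(n)$, there are at most $m-1=O(n)$ peels, and the total running time is $O(n^2)$.

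Finally, by Corollary 1 an SNE is also an FNE(III) and an FNE(II), so the $O(n^2)$ SNE computation above settles those two cases for free. For M-SNE (defined in the corresponding later section) the reasoning will be entirely analogous: its defining deviation conditions are again of the form ``$s(B\setminus E)+a_j\in[b,2b)$ for some $E\subseteq B$'', i.e. constant-target subset-sum queries under the constant-$b$ hypothesis, so either a direct peeling procedure or the verification that the SNE constructed above already meets the M-SNE conditions yields the bound. I expect the only real obstacle to be the careful bookkeeping of the reduction: confirming that every minimum-subset query that arises genuinely has a constant-bounded optimum (so that the constant-capacity knapsack routine applies) and that a witnessing subset, not merely its weight, is extracted in linear time; once that is nailed down, summing $O(n)$ such queries over the peeling process gives $O(n^2)$ in every case.
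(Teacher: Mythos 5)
Your proposal is correct and follows the route the paper intends (the paper states this corollary without proof, calling it ``obvious''): it combines the peeling construction behind Lemma~5 and Theorem~6 with the observation, recorded in the preliminaries, that subset sum with a constant target is linearly solvable, then discharges FNE(II), FNE(III) and M-SNE via the implication diagram from SNE. The details you flag as needing care (each minimum-subset query has optimum in $[b,2b)$, witness recovery in linear time, $O(n)$ peels) all check out, so nothing further is required.
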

\begin{theorem}$PoA^{SNE}=PoS^{SNE}=0.5$.\end{theorem}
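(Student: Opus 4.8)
The plan is to prove $PoA^{SNE}=PoS^{SNE}=0.5$ in two halves. Since $SNE \subseteq FNE(III)$ and every covered bin in an $FNE(III)$ is minimal covered, the minimal-covered property immediately gives a lower bound of $0.5$ for both $PoA$ and $PoS$; so it remains only to exhibit, for arbitrarily large optimal value, an instance all of whose $SNE$s have social welfare roughly half the optimum. Because $PoS$ is the more demanding quantity (we must control \emph{every} $SNE$, not just one), proving the $PoS$ upper bound automatically yields the $PoA$ upper bound too.

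First I would reuse (or lightly adapt) the instance from the proof of Theorem 4: $2n$ large items of size $2n-2$, $4n$ unit items, and bin volume $b=2n$. I would verify via Lemma 5 that the partition $\pi^{*}$ grouping bins into ``two large items'' (size $4n-4<4n$, wait --- here one must check the minimal/minimum-subset condition carefully) is actually an $SNE$: concretely, I would check that when bins are sorted by total size, each $B_k$ is a minimum subset with respect to $\cup_{i\ge k}B_i$. The social welfare of this partition is about $n$ (the $n$ bins of two large items each, plus two bins of $2n$ units), while the optimal partition puts one large item with two units per bin, covering $2n$ bins; the ratio tends to $0.5$. To get the $PoS$ claim I must argue this is essentially forced --- i.e. that \emph{no} $SNE$ of this instance does substantially better --- which is where Lemma 5 does the real work: any $SNE$ must have its smallest bin equal to a minimum covering subset of the union of all bins from that point on, and a counting/exchange argument on how large and small items can be combined should show every $SNE$ wastes large items in pairs (or equivalently packs units wastefully), capping the welfare near $n$.

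The main obstacle I anticipate is precisely this last point: showing the upper bound holds for every $SNE$ rather than for one convenient $SNE$. The minimum-subset characterization in Lemma 5 is a global condition, and a naive instance might admit ``good'' $SNE$s that mix large and small items efficiently; I would need to tune item sizes so that any minimum covering subset of a tail union is forced to be wasteful (e.g. by making it impossible to hit $b$ exactly except in the intended configurations, perhaps by a parity or gcd argument on the sizes $2n-2$ and $1$ against $b=2n$). If the Theorem 4 instance does not already have this rigidity, I would modify sizes slightly (for instance replacing units by items of a size coprime-to-$b$ that still only combine into covered bins in one wasteful way) so that Lemma 5 pins down the structure of every $SNE$ up to the welfare bound.

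Finally, I would assemble the pieces: lower bound $0.5$ from minimal-coveredness (already available since $SNE \subseteq FNE(III)$ and Theorem 4's argument), upper bound $0.5$ for $PoS$ (hence also for $PoA$) from the tuned instance together with Lemma 5, and conclude $PoA^{SNE}=PoS^{SNE}=0.5$. I would keep the exposition short, citing Lemma 5 for the structural step and Theorem 4 for the construction, and only spell out in detail the verification that the claimed partition is an $SNE$ and the argument that no $SNE$ beats the $(n+2)/(2n)$ ratio.
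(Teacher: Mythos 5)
Your overall architecture matches the paper's: the lower bound of $0.5$ for both quantities follows because every covered bin in an SNE is minimal covered (hence has size $<2b$), and the upper bound must come from an instance in which \emph{every} SNE has welfare about half the optimum. You also correctly identified exactly where the difficulty sits. However, there is a genuine gap: the concrete instance you propose to start from (the Theorem~4 instance with large items of size $2n-2$, unit items, $b=2n$) provably does \emph{not} work, and the repair is only gestured at rather than carried out. In that instance the optimal partition puts one large item with two units in each bin, so every bin has size exactly $b=2n$; no coalition can form a covered bin of strictly smaller total size, so the optimal partition is itself an SNE and the instance certifies $PoS^{SNE}=1$, not $\leq 0.5$. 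Since the entire content of the theorem beyond Theorem~4 is the $PoS$ upper bound, leaving the construction as ``I would tune the sizes by a parity or gcd argument'' leaves the main claim unproved.

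For comparison, the paper's actual construction is: $2n$ large items of size $2n-1$, $2n$ small items of size $2$, and $b=2n$. The parity obstruction you anticipated is exactly what makes it work: any subset containing a large item has odd total size if it contains an odd number of large items, and no combination of large and small items sums to exactly $2n$ except $n$ small items alone; hence by Lemma~5 the smallest bin of any SNE must be $n$ small items (size exactly $2n$), the next must be the other $n$ small items, and the large items are then forced into pairs of total size $4n-2$. Every SNE therefore covers $n+2$ bins while the optimum (one large plus one small per bin) covers $2n$, giving the ratio $(n+2)/(2n)\rightarrow 0.5$. Your instinct about how to fix the instance was right, but the verification that \emph{all} SNEs of the tuned instance are bad is the substantive step, and it needs to be written out.
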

\begin{proof}For any $\pi=(B_1,B_2,\cdots, B_m)\in SNE$, where $B_m$ is the unique uncovered bin, we have $s(B_i)<2b$ for
all $1\leq i\leq m-1$, then $PoA^{SNE}\geq 0.5$. It suffices to show that $PoS^{SNE}\leq 0.5$.

For any $n$, we  construct an instance as follows. There are $4n$ items in total: $2n$ large items
$a_1=a_2=\cdots=a_{2n}=2n-1$, and $2n$ small ones $a_{2n+1}=a_{2n+2}=\cdots=a_{4n}=2$, and $b=2n$.
Then all SNEs are made up of two small bins, each of which contains $n$ small items,  and $n$ large
bins, each of which contains two large items. And all of them have a  social profit of $n+2$.
However, if we pair each large item with a small one, we will arrive at a social profit of $2n$.
Since $(n+2)/(2n)\rightarrow 0.5$ ($n\rightarrow \infty$), we get the whole theorem.\qed\end{proof}

\section{Modified strong Nash equilibrium}
In the definition of FNE(I), further allowing the simultaneous migration of a group of agents from
the same minimal covered bin into the unique uncovered bin gives the concept of modified strong
Nash equilibrium (M-SNE). It is easy to see that M-SNE is  stronger than FNE(I), but weaker than
SNE. The exact definition is given as follows. Suppose that $\pi=(B_1, B_2, \cdots, B_m)$ is a
reasonable partition, and $B_m$ is the unique uncovered bin.
\begin{definition}$\pi$ is said to be an M-SNE, iff it is rational, and for any $1\leq i\leq m-1$, there is no subset $E\subseteq B_i$, such that
$s(B_i\setminus E)>s(B_m)$ and $B_m\cup E$ is minimal covered.\end{definition}
\begin{lemma}For any rational partition $\pi$, there exists an $\pi'\in M-SNE$, such that $p(\pi')\geq p(\pi)$.\end{lemma}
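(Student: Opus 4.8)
The plan is to mimic the proof of Lemma 4: I will show that from any rational partition $\pi$ that is \emph{not} an M-SNE one can build a rational partition $\pi'$ with $\pi'\prec\pi$ and $p(\pi')\ge p(\pi)$. The lemma then follows at once, because $\prec$ is a strict partial order on the finite set of reasonable partitions, so repeatedly applying the construction starting from $\pi$ must terminate after finitely many steps, and it can only terminate at a rational partition that admits no such improving step — that is, at an M-SNE with social welfare at least $p(\pi)$.

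For the construction, write $\pi=(B_1,\dots,B_m)$ with $B_m$ the unique uncovered bin. If $\pi$ violates Definition 7 there are an index $k\le m-1$ and a nonempty $E\subseteq B_k$ with $s(B_k\setminus E)>s(B_m)$ and $B_m\cup E$ minimal covered. First move $E$ from $B_k$ into $B_m$; the new bin $B_m\cup E$ is minimal covered, of size $s(B_m)+s(E)<s(B_k)$. Now distinguish two cases. If $s(B_k\setminus E)<b$ (in particular if $E=B_k$), then $B_k\setminus E$ is the new unique uncovered bin, all other bins are untouched, and the resulting partition is already rational; the number of covered bins is unchanged, so $p$ is preserved. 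If instead $s(B_k\setminus E)\ge b$, then $B_k\setminus E$ is covered; should it fail to be minimal covered, I repeatedly remove the currently smallest item from it until it becomes a minimal covered bin $B_k\setminus E\setminus D$, and then run $FFD$ on the removed set $D$, splitting it into minimal covered bins plus at most one uncovered remainder. Since every bin created during the whole procedure is a subset of $B_k$, each has size $<s(B_k)$; and at most one uncovered bin survives overall, so $\pi'$ is rational. In every case the single bin of size $s(B_k)$ has been replaced by one or more bins all of size $<s(B_k)$ while every other bin is unchanged, and no covered bin is ever destroyed, so $p(\pi')\ge p(\pi)$, as desired.

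The verifications that $\pi'$ is rational and that $p(\pi')\ge p(\pi)$ are routine inspections, exactly as in Lemma 4. The one point that I regard as the main (still minor) obstacle is checking $\pi'\prec\pi$ in the subcase where the number of covered bins strictly increases, so that $P(\pi)$ and $P(\pi')$ have different lengths and could \emph{a priori} be incomparable (the degenerate case in which $P(\pi)$ is a proper prefix of $P(\pi')$). This is ruled out because every newly created bin has size strictly less than $s(B_k)$, and $s(B_k)$ is one of the entries of $P(\pi)$; hence, after sorting, $P(\pi')$ contains an entry strictly smaller than the entry $s(B_k)$ of $P(\pi)$, located at or before the rank of $s(B_k)$, while every entry of $P(\pi)$ of smaller rank still appears unchanged in $P(\pi')$. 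Therefore the two vectors genuinely differ — with the smaller value on the $\pi'$ side — at that position, so they are lexicographically comparable and $\pi'\prec\pi$.
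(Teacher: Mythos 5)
Your proof is correct and takes essentially the same route as the paper's (one-line) argument: migrate the deviating group $E$ into the unique uncovered bin, check that the resulting partition is rational with no loss of covered bins and is strictly smaller under $\prec$, and iterate to termination on the finite set of reasonable partitions. The only remark is that your repair step for $B_k\setminus E$ is dead code: a covered subset of a minimal covered bin is automatically minimal covered (its total size only decreases while $a_{\min}$ can only increase), so the single migration already yields a rational partition.
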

\begin{proof}If $\pi$ is not an M-SNE,  we can construct a partition $\pi'$ from $\pi$ by letting an unsatisfied group of agents migrating into
the unique uncovered bin. It is easy to see that $\pi'\prec \pi$. \qed\end{proof}
\begin{theorem}$PoA^{M\mbox{-}SNE}=0.5,PoS^{M\mbox{-}SNE}=1$.\end{theorem}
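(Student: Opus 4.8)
The plan is to handle the two equalities separately, leaning on the machinery already developed for the fireable equilibria. For $PoS^{M\text{-}SNE}=1$, I would argue exactly as in Theorem 1 and Theorem 7: start from a reasonable optimal partition (one always exists), apply $FFD$ to make every covered bin minimal covered so that it becomes rational without losing social welfare, and then invoke Lemma 6. Lemma 6 says that from any rational partition we can reach some $\pi'\in M\text{-}SNE$ with $p(\pi')\ge p(\pi)$; since we started at the optimum, $p(\pi')=OPT(G)$, giving ratio $1$. Combined with the trivial fact that the ratio can never exceed $1$, this yields $PoS^{M\text{-}SNE}=1$.

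For $PoA^{M\text{-}SNE}=0.5$, the lower bound $PoA^{M\text{-}SNE}\ge 0.5$ is immediate from rationality: every M-SNE is rational, so every covered bin $B_i$ satisfies $b\le s(B_i)<b+a_{\min}(B_i)<2b$ (using $a_j<b$). Hence the total size of items in covered bins is less than $2b\cdot p(\pi)$; since this total is at most $\sum_j a_j$ and $OPT(G)\le \tfrac{1}{b}\sum_j a_j$, a standard counting argument gives $p(\pi)\ge \tfrac12 OPT(G)$ — essentially the same bound invoked in the proof of Theorem 4 for FNE. For the matching upper bound I would reuse the construction from Theorem 4 (or the one in Theorem 9): the instance with $2n$ large items of size $2n-2$, $4n$ unit items, and $b=2n$. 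I must check that the partition in which each bin holds either two large items or $2n$ unit items is genuinely an M-SNE, not merely an FNE(III): a large bin has $s=4n-4>2n=b$ but is minimal covered (removing a large item drops it to $2n-2<b$), and no subset $E$ of such a bin can be moved into the uncovered bin while keeping $s(B_i\setminus E)>s(B_m)$ and $B_m\cup E$ minimal covered, because the uncovered bin starts empty and any single large item already makes it non-minimal-covered while also emptying the source below $b$; similarly for the all-units bins. Against this the optimum pairs each large item with two units, covering $2n$ bins versus $n+2$, and $(n+2)/(2n)\to 0.5$.

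The routine part is the $PoS$ direction and the $PoA$ lower bound, which are almost verbatim repetitions of earlier arguments. The one spot that needs genuine care — and the main obstacle — is verifying that the extremal instance's partition really lies in $M\text{-}SNE$: because M-SNE permits \emph{group} migrations of agents out of a single minimal covered bin into $B_m$ (unlike FNE(I)), one must confirm no such coalition move is profitable, i.e. check the condition ``$s(B_i\setminus E)>s(B_m)$ and $B_m\cup E$ minimal covered'' fails for every candidate $E$ and every covered bin. Since in the fixed partition $B_m$ is empty ($s(B_m)=0$), the condition ``$B_m\cup E$ minimal covered'' forces $E$ to be a minimal covered subset of $B_i$; for a two-large-item bin the only such $E$ is the whole bin, leaving $s(B_i\setminus E)=0\not>0$; for a $2n$-unit bin, $E$ must be all $2n$ units, again leaving $s(B_i\setminus E)=0$. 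So no group deviation exists and the partition is an M-SNE, completing the proof.
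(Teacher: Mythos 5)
Your proof is correct, and two of its three components (the $PoS$ argument via Lemma~6 applied to a rational optimal partition, and the lower bound $PoA^{M\text{-}SNE}\ge 0.5$ from the fact that every covered bin in a rational partition has size less than $2b$) coincide with the paper's. The one place you diverge is the upper bound $PoA^{M\text{-}SNE}\le 0.5$: the paper simply observes that every SNE is an M-SNE, so $PoA^{M\text{-}SNE}\le PoA^{SNE}=0.5$ follows from Theorem~9, whereas you re-use the explicit instance from Theorem~4 ($2n$ items of size $2n-2$, $4n$ unit items, $b=2n$) and verify directly that the partition into two-large-item bins and all-unit bins is an M-SNE. Your verification is sound: with $s(B_m)=0$, the only subsets $E$ making $B_m\cup E$ minimal covered are the entire source bins, which leaves $s(B_i\setminus E)=0\not>0$, so no group deviation is available; and the optimum of $2n$ versus the equilibrium value $n+2$ gives the ratio $0.5$ in the limit. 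Your route is more self-contained (it does not depend on Theorem~9 or on the implication $SNE\Rightarrow M\text{-}SNE$), at the cost of the extra case analysis; the paper's route is shorter but leans on the containment of equilibrium concepts. Both are valid.
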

\begin{proof}Lemma 6 guarantees that $PoS^{M\mbox{-}SNE}=1$. $PoA^{SNE}=0.5$ tells us that $PoA^{M\mbox{-}SNE}\leq 0.5$.
 Since for each bin $B$ in a rational partition,
$s(B)<2b$, we get $PoA^{M\mbox{-}SNE}\geq 0.5$. Therefore, $PoA^{M\mbox{-}SNE}=0.5$.
\qed\end{proof}
\begin{theorem}To compute an arbitrary M-SNE is NP-hard.\end{theorem}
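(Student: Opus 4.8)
The plan is to prove NP-hardness by reduction from the partition problem, following the same template used in Theorem~7 for FNE(II)/FNE(III), but with an instance tailored so that the only way an M-SNE can fail to "reveal" a yes-answer is forced by the M-SNE migration rule. Recall that M-SNE permits a group of agents to migrate from a minimal covered bin into the uncovered bin $B_m$ whenever the bin they leave stays strictly larger than $B_m$ after the move and $B_m$ together with the incoming group is minimal covered. The goal is to design sizes so that in any rational partition which is \emph{not} the "good" one, such a group migration is available, and conversely the good partition (the one encoding a balanced split) is genuinely an M-SNE.

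First I would take an instance $e_1,\dots,e_n$ of partition, let $S=\sum_j e_j$, and build an SBC instance with the $n$ "small" items $a_j=2e_j$ and two "large" items $a_{n+1}=a_{n+2}=S+1$, with bin volume $b=2S+1$, exactly as in Theorem~7; the total size is $2b$, so any rational partition has exactly two bins $\{B_1,B_2\}$ with $B_1$ minimal covered and $B_2$ the uncovered remainder. Then I would argue: if the answer is \textbf{yes}, the partition placing $a_{n+1}$ with the small items indexed by a balanced set $C$ (so $s(B_1)=b$ exactly) and everything else in $B_2$ is an M-SNE — here $B_2$ contains $a_{n+2}$ plus the complementary small items, $s(B_2)=b$, so actually \emph{both} bins are covered and there is no uncovered bin at all, hence the M-SNE condition is vacuous once we note this reasonable partition is rational. (If I instead want to keep exactly one uncovered bin I would add a single tiny dummy item and re-balance $b$; I will check which variant keeps the bookkeeping cleanest.) Conversely, if the answer is \textbf{no}, I must show every rational partition admits a profitable group migration into the uncovered bin. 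Splitting into the two cases of Theorem~7 — $B_1$ holds both large items, or $B_1$ holds one large item and a set $D$ of small ones — in case~1, $B_2$ is the uncovered set of all small items, and the group $E$ consisting of a minimal cover of $B_2$ drawn using large item $a_{n+1}$... wait: the migration must come \emph{from} a minimal covered bin into the uncovered one, so I would take the group $\{n+1,n+2\}\cup(\text{nothing})$? — more carefully, move a subset $E\subseteq B_1$ into $B_2$ with $B_2\cup E$ minimal covered and $s(B_1\setminus E)>s(B_2)$; since $s(B_1)=2(S+1)>s(B_2)=2S$, removing a small item or even a large one from $B_1$ can be arranged to keep it above $s(B_2)$, while $B_2\cup E$ can be brought to a minimal cover. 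In case~2, the no-answer forces $s(D)\ge S+1$ (in $e$-units), hence $s(B_1)\ge b+3$, and moving the group $\{n+1\}$ (or $\{n+1\}$ together with a few small items) out makes $B_1\setminus E$ still cover while $B_2\cup E$ becomes a minimal cover smaller than the old $s(B_1)$, violating M-SNE.

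The main obstacle I anticipate is the \textbf{group-versus-singleton mismatch}: M-SNE only allows migrations \emph{into} the unique uncovered bin and only \emph{from a single} minimal covered bin, so unlike FNE(II) I cannot move one clever agent and let others get fired — I must exhibit, in the no-answer case, an honest group $E$ leaving one covered bin such that simultaneously $s(B_1\setminus E)>s(B_m)$ \emph{and} $B_m\cup E$ is minimal covered. With only two bins this is delicate: $E$ together with whatever is in $B_m$ must land in the half-open interval $[b,b+a_{\min})$, and I need the arithmetic of the $2e_j$'s and the $S+1$'s to guarantee such an $E$ always exists whenever the partition is "wrong." I expect to resolve this by choosing the gap parameters (the $+1$ in the large items and the $+1$ in $b$) so that the large items behave like near-exact half-covers, forcing any unbalanced small-item split to overshoot $b$ by at least the size of some movable small item; then that small item (or that item plus swapping a large item across) furnishes the required group $E$. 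If two bins prove too tight, the fallback is to duplicate the construction into $n$ parallel "gadgets" (as in the $PoA$ lower-bound instances) so that there is genuine slack and an uncovered bin to migrate into, at the cost of a slightly longer verification. Once the reduction is set up, hardness is immediate: deciding whether the returned M-SNE contains an exactly covered bin solves partition, so computing an arbitrary M-SNE is NP-hard. \qed
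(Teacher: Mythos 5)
Your overall template (reduce from partition; decide the instance by checking whether the returned M-SNE contains an exactly covered bin) is the same as the paper's, but the concrete instance you chose --- the one from Theorem~7, with two large items $a_{n+1}=a_{n+2}=S+1$ and $b=2S+1$, where $S=\sum_j e_j$ --- does not work for M-SNE, and the obstacle you flag (``group-versus-singleton mismatch'') is exactly where it breaks. The partition $B_1=\{n+1,n+2\}$, $B_2=\{1,\dots,n\}$ is rational ($s(B_1)=2S+2=b+1$ is minimal covered, $s(B_2)=2S<b$), and it \emph{is} an M-SNE: a violating group $E\subseteq B_1$ must satisfy $s(B_1\setminus E)>s(B_2)=2S$, i.e.\ $s(E)<2$, which forces $E=\emptyset$, and $B_2\cup\emptyset$ is not covered. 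So this M-SNE has no exactly covered bin whether or not the partition answer is yes, and the decision procedure fails. Your sketch of ``case 1'' asserts the opposite (``removing a small item or even a large one from $B_1$ can be arranged to keep it above $s(B_2)$''), but $B_1$ contains no small items in this sub-case and removing a large one drops it to $S+1\le 2S$. Case~2 also fails: the M-SNE condition demands that $B_2\cup E$ be minimal covered \emph{as is} (nothing can be fired from the destination bin), and for $e=(1,1,3,3)$, so items $2,2,6,6,9,9$ and $b=17$, the partition $\{9,6,6\},\{9,2,2\}$ is an M-SNE with no exactly covered bin even though the partition answer is yes (every candidate $E$ either leaves $B_1\setminus E$ too small or makes $B_2\cup E$ non-minimal covered, e.g.\ $\{9,2,2,6\}$ has size $19\ge b+2$).

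The paper's proof avoids all of this by changing the instance: it uses a \emph{single} large item $a_{n+1}=S-1$ with $b=2S-1$, so the total size is $3S-1<2b$ and exactly one bin is covered. Then in every ``bad'' configuration the violating group is simply all the small items of the covered bin; after the migration the uncovered bin becomes (a subset of) $\{1,\dots,n\}$ of total size $2S=b+1$, which is automatically minimal covered since every item has size at least $2$, while the deserted bin $\{n+1\}$ retains size $S-1$, strictly larger than the old uncovered bin. Your fallback ideas (adding a dummy item, duplicating gadgets) are not developed far enough to tell whether they would recover the argument, so as written the proposal has a genuine gap rather than an alternative proof.
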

\begin{proof}We still prove by reduction to the partition problem. Given any instance of the partition problem: $e_1, e_2, \cdots, e_n$, we construct an instance
of SBC as follows.

There are $n+1$ items in total: $a_j=2e_j$, for all $1\leq j\leq n$, $a_{n+1}=\sum_{j=1}^ne_j-1$,
and $b=\sum_{j=1}^n2e_j-1$. Let $\pi=(B_1,B_2)$ be an M-SNE. Obviously, exactly one of the two bins
is covered. Suppose $B_1$ is covered. We prove that the answer to the partition problem is yes iff
$B_1$ is exactly covered.

The sufficiency part is trivial, so we only have to show the necessity part. Suppose $\sum_{j\in
C}e_j=(1/2)\sum_{j=1}^ne_j$. There are two possibilities: $n+1\in B_1$ and $n+1\notin B_1$. In the
latter case, we have $B_1=\{1,2,\cdots, n\}$. So $s(B_1)=\sum_{j=1}^n2e_j=b+1>b$, the  items in $C$
will benefit if they move into $B_2$. Therefore this case will not occur. In the former  case, if
$s(B_1)>b$, we will have $s(B_1)\geq b+2$, so the items in $B_1\setminus \{n+1\}$ will benefit if
they move into $B_2$, a contradiction. \qed\end{proof}

Since M-SNE is defined analogously to FNE(I), it should actually be written as M-SNE(I), and we can
also define M-SNE(II) and M-SNE(III), analogously to FNE(II) and FNE(III), respectively. It is not
hard to show that all the results for M-SNE, i.e.  NP-hardness, PoA=0.5 and PoS=1,  also hold for
M-SNE(II) and M-SNE(III).

\section{Further discussions}In this paper, we discussed the selfish bin covering problem. In order to see how decentralized decision making affects
the social profit, various PoAs and PoSs are analyzed. We mainly considered
 six equilibrium concepts, whose relations are summarized as follows.

\begin{center}
$FNE(III)\Rightarrow FNE(II)\Rightarrow FNE(I)$\\
$~~\Uparrow~~~~~~~~~~~~~~~~~~~~~~~~~~~~~~~~\Uparrow$
\\$~~~~SNE~~~~~~~~~~\Rightarrow~~~~~~~~~~~~ M$-$SNE$\\
$\Downarrow\hskip 3.9cm$\\
$NE\hskip 4cm$\end{center}

It is easy to show that the above relation graph is complete, i.e., all the other relations that
are not expressed in this graph are false.

It's not hard to check that if we consider asymptotic PoAs and asymptotic PoSs, i.e. in the
definitions we change inf as $\liminf$ and let $OPT(G)\rightarrow \infty$, then all the results
remain the same. This can be interpreted as that these measures are pretty stable with regard to
the size of the problems.

We remark that the algorithm FFD-LPT in Section 4 is not a best-response-dynamic algorithm. An open
problem is that starting from any rational partition, does any best-response-dynamic algorithm,
which always converges to an NE, ends in polynomial time, as we showed for FNE(I)? We know this is
impossible for FNE(II) or FNE(III), as they are both NP-hard to compute. It's also interesting to
study whether SNE and M-SNE can be arrived at in polynomial number of migrations.

It's a pity that we haven't  found the applications of selfish bin covering yet, but we believe
that it deserves study just because of its simplicity.  The following are some interesting
directions for further research.

1. To  consider and design other payoff allocation rules, e.g. the membership rule, i.e. the payoff
of a covered bin is shared equally among its members, allocation rule proportional to varieties of
power indices, etc.;

2. To consider the {\it egalitarian} social welfare function, i.e. the social welfare is determined
by the lowest payoff among all the players. The corresponding combinatorial optimization problem
can be easily shown to be strongly NP-hard, by reduction from the 3-partition problem;

3. To take a hybrid social welfare of the utilitarian rule and the egalitarian rule, either
aggregating them into one function using the weighting way, or discussing them lexicographically,
or any other method dealing with bi-criteria problems;

4. To extend the model to heterogenous bin payoffs, i.e. the payoff gained by every covered bin is
a function of its total size. Notice that this is closely related to the congestion game with a
parallel network;

5. To extend the model to heterogenous bins: variable sized bins, variable payoff  bins. Notice
that these models are closely related with the selfish load balancing, i.e. scheduling games, on
uniformly related machines;

6. To extend the model to higher dimensional cases, i.e. items and bins are characterized by
vectors of a fixed dimension;

7. To discuss the ordinal model, i.e. how much a covered bin gains is determined by its rank among
all the covered bins. Notice that externality occurs in this model;

8. To discuss incompatible agent families, i.e. agents from the same family can not be packed in
the same bin;

9. To discuss a hybrid model of centralization and decentralization, i.e. each agent may control
more than one items;

10. To discuss online coalition formations, i.e. agents arrive one by one, and each agent decides
either to join an old bin, or to open a new bin, e.g. the procedure proposed by E. Maskin
(\cite{m03}).

 {\bf ACKNOWLEDGEMENTS.} The authors would like to thank Prof. G.C. Zhang for the wonderful lecture he
delivered about selfish bin packing, which gives the basic motive of this paper. They are also
grateful to Prof. M.C. Cai for many helpful discussions.

\end{document}